\def\fqu#1 {\fbox {\footnote {\ }}\ \footnotetext { From Qu: {\color{red}#1}}}
\def\fxh#1 {\fbox {\footnote {\ }}\ \footnotetext { From Hai: {\color{blue}#1}}}
\begin{document}
\begin{CJK*}{GBK}{song}
\mainmatter  % start of an individual contribution

% first the title is needed
\title{A New Method to Compute the 2-adic
Complexity of Binary Sequences}

% a short form should be given in case it is too long for the running head
\titlerunning{2-adic complexity}

% the name(s) of the author(s) follow(s) next
%
% NB: Chinese authors should write their first names(s) in front of
% their surnames. This ensures that the names appear correctly in
% the running heads and the author index.
%%%%%%%%%%%%%%%%%%%%%作者信息%%%%%%%%%%%
%%%%%%%%%%%%%%%%%%%%%%%%%%%%%%%%%

\author{Hai Xiong\and Longjiang Qu\and Chao Li}
\authorrunning{Xiong Hai, Longjiang Qu, Chao Li}

\institute{College of Science,
National University of Defense Technology, Changsha  410073, China
\mailsa\\
}

\toctitle{Lecture Notes in Computer Science}
\tocauthor{Authors' Instructions}
\maketitle

\begin{abstract}
In this paper, a new method is presented to compute the $2$-adic complexity of pseudo-random sequences.
With this method, the $2$-adic complexities of
 all the known sequences with ideal $2$-level autocorrelation are uniformly determined.
 Results show that their $2$-adic complexities equal their periods. In other words, their $2$-adic complexities attain the maximum.
Moreover, $2$-adic complexities of two classes of optimal autocorrelation sequences with period $N\equiv1\mod4$, namely Legendre sequences and Ding-Helleseth-Lam sequences,
 are investigated.
Besides, this method also can be used to
compute the linear complexity of binary sequences regarded as sequences over other finite fields.

\textit{Index Terms---} $2$-adic complexity; linear complexity; ideal $2$-level autocorrelation sequence; optimal
autocorrelation sequence
\end{abstract}

\section{Introduction}
Linear feedback shift registers (LFSRs) and feedback
with carry shift registers (FCSRs) are two classes of pseudo-random sequence generators.
The sequences produced by them could have good randomness, such as low correlation, long period and so on.
These pseudo-random sequences  are widely used in cryptography and communication systems.

For any binary periodic sequence $s$, it always can be generated
by an LFSR or an FCSR. The length of the shortest LFSR resp.
FCSR which can generate $s$
 is called the linear complexity resp. $2$-adic complexity of $ s $, symbolically $LC( s )$
 resp. $AC( s )$. Since $ s $ can be completely determined by the Berlekamp-Massey
 algorithm \cite{Massey:1969} resp. rational approximation algorithm \cite{Klapper:1995} with $2LC( s )$ resp. $2AC( s )$
 consecutive bits, linear complexity and $2$-adic complexity are two of the most
 important security criteria of binary sequences.

 It is of interest to investigate the relationship between linear complexity and
 $2$-adic complexity.
 However, it may be quite difficult in general since little is known in the literature.
 Hence a natural tradeoff  is to investigate the linear complexity of sequences whose $2$-adic complexity is known or the $2$-adic
 complexity of sequences whose linear complexity is known.
 Until now, there are only a few  classes of pseudo-random sequences whose linear
 complexity
 and $2$-adic complexity both are clear.
  Seo et. al. \cite{Seo:2000} and Qi et. al. \cite{Qi:2003} got a lower bound on the linear complexity of a special class of
$l$-sequences respectively.  Klapper and Goresky \cite{Klapper:1997} derived a simple result about the $2$-adic complexity of $m$-sequences.
A breakthrough of this problem was given by Tian et. al. \cite{Tian:2010}.
   They completely determined the $2$-adic
 complexity of $m$-sequences and showed that all the $m$-sequences have optimal $2$-adic complexity.

$m$-sequences is a class of ideal $2$-level autocorrelation sequences which play a significant
role in applications for their optimal autocorrelation.
A large amount of ideal $2$-level autocorrelation sequences other than $m$-sequences have been constructed, for example
Legendre sequences, twin-prime sequences  and Hall's sextic residue sequences \cite{Golomb:2005}.
The linear complexities of these sequences have
all been determined; see \cite{Xiong:2013} for a survey. However, as
far as the authors known,
no result about the $2$-adic complexities of these sequences other than $m$-sequences is known yet.

In this paper, we will  present
%investigate \qu{ the} $2$-adic complexity of binary sequences \fqu{Del "  with optiaml autocorrelation"} .
a new method  to compute  the $2$-adic complexity of binary sequences. According to \cite{Klapper:1997}, to determine $2$-adic complexity of a binary sequence is equivalent to determine the greatest common divisor of two numbers which are associated with the sequence. Here, we convert this problem to compute   the determinant of a {circulant} matrix
 and the greatest common divisor of two other integers.
Then by using the new method,  we prove that all the known sequences with ideal $2$-level
  autocorrelation have the maximum $2$-adic complexities, i.e.
 their $2$-adic complexities  equal their periods. We also prove that Legendre sequences and Ding-Helleseth-Lam sequences with period $N\equiv1\mod4$
 have maximum $2$-adic complexities.
Hence Legendre sequences, twin-prime sequences and Hall's sextic residue sequences are nontrivial binary sequences whose
 linear complexities and $2$-adic complexities both could attain the maximum. Finally, as a byproduct,
 we show that the new method can be used to compute the  linear complexity
 of binary sequences when we regard them as sequences over other finite fields.

 The rest of this paper is organized as follows. Section 2 introduces some well-known results and notations.
 In Section 3,  a new method is presented to compute $2$-adic complexity of  binary sequences. In Section 4, $2$-adic complexities of ideal $2$-level autocorrelation sequences and
   two other classes of optimal autocorrelation sequences are determined.
 Section 5 presents some results on the linear complexity when one regard a  binary sequence
 as a sequence over another finite field.
 We conclude this paper in Section 6.

\section{Preliminaries}

In this section, we will introduce some notations and review some well-known results.
\subsection{Notations}
\begin{enumerate}
\item The symbol ``$+$" has a multiple meaning: it stands for the integer addition, or for the addition over $\mathbb{F}_2$, or even for  the addition over integer residue rings. But this will not bring confusion in concrete situations.
\item A sequence is called binary if its elements consist of $0$ and $1$.
\item For a binary sequence $s=(s_0, s_1,\cdots, s_{N-1})$, its \emph{sequence polynomial}  is $P_s(x)=\sum\limits_{i=0}^{N-1}s_ix^i$. The \emph{complementary sequence} of $s$, denoted by $\overline{s}$, is defined as $(1-s_0, 1-s_1, \cdots, 1-s_{N-1})$.
Let $D_s$ denote the \emph{support set} of $s$, which is defined as $D_{s}=\{0\le i\le N-1:\ s_i=1\}$.
\item Let $\mathbb{Q}_2$ denote the complete field of $\mathbb{Q}$ with respect to the $2$-adic absolute value.
\item Assume that $p=df+1$ is a prime. Let $\alpha$ be a primitive element of $\mathbb{F}_p$.
The \emph{cyclotomic classes} of order $d$ with respect to $\mathbb{F}_p$, denoted by $D_i^{(d,p)}(0\le i\le d-1)$, are defined as $D_i^{(d,p)}=\{\alpha^{i+kd}:0\le k\le f-1\}$.
\item Let   $S$  be a subset of $\mathbb{Z}/N\mathbb{Z}$ with $k$ elements. For any integer $\tau$, define $S+\tau=\{(a+\tau)\mod N:a\in S\}$. If there exists a positive integer $\lambda$ such that $|S\cap(S+\tau)|=\lambda$ for any $\tau\not\equiv0\mod N$, then $S$ is called an $(N, k, \lambda)$ \emph{cyclic difference set}.
 \end{enumerate}

\subsection{Optimal autocorrelation sequences}

Let $s =(s_0,s_1,\cdots,s_{N-1})$ be a binary sequence with period $N$.
The \emph{autocorrelation function} of $s$ is defined by
$$C_{s}(\tau)=\sum_{i=0}^{N-1}(-1)^{s_i+s_{i+\tau}},\ \tau\in \mathbb{Z}/{N\mathbb{Z}}.$$
Clearly, $C_{s}(0)=N$.

We say that $s$ is an \emph{optimal autocorrelation sequence} if
for any $\tau\ne0$, \\
(1)  $C_{s}(\tau)=-1$ and $N\equiv -1\mod4$; or\\
(2)  $C_{s}(\tau)\in\{1, -3\}$ and $N\equiv 1\mod4$; or\\
(3)  $C_{s}(\tau)\in\{2,-2\}$ and $N\equiv 2\mod4$; or\\
(4)  $C_{s}(\tau)\in\{0, -4\}$ and $N\equiv 0\mod4$.\\
In Case (1), the sequences are also said to have \emph{ideal $2$-level autocorrelation}.
{Many classes of such sequences have been reported, such as Legendre
sequence, Hall's sextic residue sequence, twin-prime sequence, and $m$-sequence, GMW
sequences, Maschiettie's hyperoval sequences, etc.
For a list of such sequences and detailed definitions of these sequences,
please refer to \cite{Nam:2006} or \cite{Cai:2009}. The following characterization of such sequences is from \cite{Golomb:2005}.
\begin{lemma}\label{le1} \cite{Golomb:2005}
Let $s$ be a binary ideal $2$-level autocorrelation sequence with period $N$.
Then $D_s$, the support set of $s$, is an $(N, \frac{N+1}{2}, \frac{N+1}{4})$ or $(N,\frac{N-1}{2},\frac{N-3}{4})$ cyclic difference set.
Based on their periods, all the known ideal $2$-level autocorrelation sequences
can be divided into three classes:
\emph{(1)} $N=2^n-1$;
\emph{(2)} $N=p$, where $p\equiv3\mod4$ is a prime;
\emph{(3)} $N=p(p+2)$, where both $p$ and $p+2$ are primes.
\end{lemma}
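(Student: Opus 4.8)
The statement bundles together two claims of very different character, and the plan is to handle them separately. The first claim---that $D_s$ is a cyclic difference set with one of the two listed parameter sets---is a genuine counting identity that follows directly from the definitions, and this is where I would put the real work. The second claim---the partition of the \emph{known} examples into the three period types---is not a theorem to be derived but a catalogue of the existing constructions, so I would treat it as a survey step, reading off the period of each family from its definition.

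For the first claim, set $k=|D_s|$ and, for each shift $\tau\not\equiv0\bmod N$, let $a(\tau)=|\{i:\ i\in D_s,\ i+\tau\in D_s\}|=|D_s\cap(D_s-\tau)|$. First I would rewrite the autocorrelation in terms of agreements and disagreements: the term $(-1)^{s_i+s_{i+\tau}}$ equals $+1$ exactly when $s_i=s_{i+\tau}$. Counting shows that the number of indices $i$ with $s_i=1,\ s_{i+\tau}=0$ is $k-a(\tau)$, and likewise the number with $s_i=0,\ s_{i+\tau}=1$ is $k-a(\tau)$, so there are $2\bigl(k-a(\tau)\bigr)$ disagreements in total. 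Hence $C_s(\tau)=N-4\bigl(k-a(\tau)\bigr)$. Imposing the ideal $2$-level condition $C_s(\tau)=-1$ gives $a(\tau)=k-\tfrac{N+1}{4}$, a value independent of $\tau$; since $D_s-\tau=D_s+(-\tau)$ and $\tau$ ranges over all nonzero residues together with $-\tau$, the constancy of $a(\tau)$ is precisely the difference-set condition with $\lambda=k-\tfrac{N+1}{4}$.

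It remains to pin down $k$, which I would extract from the global identity $\sum_{\tau=0}^{N-1}C_s(\tau)=\bigl(\sum_{i=0}^{N-1}(-1)^{s_i}\bigr)^2=(N-2k)^2$. Evaluating the left side with $C_s(0)=N$ and $C_s(\tau)=-1$ for $\tau\ne0$ yields $(N-2k)^2=1$, so $k=\tfrac{N\pm1}{2}$; substituting the two values of $k$ into $\lambda=k-\tfrac{N+1}{4}$ produces exactly the two parameter triples $(N,\tfrac{N+1}{2},\tfrac{N+1}{4})$ and $(N,\tfrac{N-1}{2},\tfrac{N-3}{4})$, with integrality of $\lambda$ forcing $N\equiv3\bmod4$ as required. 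The only real obstacle is the second claim: there is no way to \emph{prove} that every ideal $2$-level autocorrelation sequence has one of the three periods, since that is a statement about the current state of the art rather than a closed classification. The honest step here is to verify case by case that $m$-sequences and GMW-type sequences have period $2^n-1$, that Legendre and Hall sextic-residue sequences have prime period $p\equiv3\bmod4$, and that twin-prime sequences have period $p(p+2)$, and then to cite the literature for the assertion that no other examples are presently known.
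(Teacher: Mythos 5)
Your proposal is correct, and in fact it supplies more than the paper does: the paper states this lemma purely as a citation to Golomb--Gong, with no proof at all, so there is no internal argument to compare against. Your derivation of the first claim is the standard one and is sound: writing $k=|D_s|$ and $a(\tau)=|D_s\cap(D_s-\tau)|$, the agreement/disagreement count gives $C_s(\tau)=N-4\bigl(k-a(\tau)\bigr)$, so the condition $C_s(\tau)=-1$ forces $a(\tau)=k-\tfrac{N+1}{4}$ independently of $\tau$, and the global identity $\sum_{\tau=0}^{N-1}C_s(\tau)=(N-2k)^2=N-(N-1)=1$ pins down $k=\tfrac{N\pm1}{2}$, which yields exactly the two parameter triples $(N,\tfrac{N+1}{2},\tfrac{N+1}{4})$ and $(N,\tfrac{N-1}{2},\tfrac{N-3}{4})$; the integrality of $\lambda$ recovering $N\equiv3\bmod 4$ is consistent with the paper's definition of ideal $2$-level autocorrelation, which already requires $N\equiv-1\bmod4$. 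Your observation that constancy of $|D_s\cap(D_s-\tau)|$ over nonzero $\tau$ is the same as constancy of $|D_s\cap(D_s+\tau)|$ (since $-\tau$ ranges over the nonzero residues as $\tau$ does) correctly matches the paper's stated definition of a cyclic difference set. Your treatment of the second claim is also the right call: the phrase ``all the known'' signals that this is a catalogue of existing constructions rather than a classification theorem, so the only honest argument is the case-by-case reading of periods ($m$-sequences, GMW sequences and Maschietti's hyperoval sequences of period $2^n-1$; Legendre and Hall's sextic residue sequences of prime period $p\equiv3\bmod4$; twin-prime sequences of period $p(p+2)$) together with a citation to the surveys, which is exactly how the paper itself, via \cite{Nam:2006} and \cite{Cai:2009}, handles it. What your approach buys is a self-contained justification of the difference-set parameters that the paper leaves entirely to the reference; there is no gap in it.
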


All the known binary sequences with optimal autocorrelation {until 2009} are surveyed by Cai and Ding \cite{Cai:2009}. Here
we only recall } the definitions of Legendre sequences and Ding-Helleseth-Lam sequences with
period of $N\equiv1\mod4$ {for later use}.

{\bf Legendre Sequences}: Let $p\equiv1\mod 4$ be a prime. Let $s$ be a binary sequence defined by
\begin{equation}
s_i=\left\{
\begin{aligned}
&1,\ \text{if } i\in D_0^{(2,p)};\\
&0,\ \text{otherwise } .
\end{aligned}
\right.\nonumber
\end{equation}
Then $s$ has optimal out-of-phase autocorrelation values ${\{1, -3\}}$.

{\bf Ding-Helleseth-Lam Sequences}: Let $p\equiv1\mod 4$ be a prime. Let $s$ be a binary sequence defined by
\begin{equation}
s_i=\left\{
\begin{aligned}
&1,\ \text{if } i\in D_0^{(4,p)}\cup D_1^{(4,p)};\\
&0,\ \text{otherwise } .
\end{aligned}
\right.\nonumber
\end{equation}
Then $s$ has optimal out-of-phase autocorrelation values ${\{1, -3\}}$.

\subsection{Feedback with carry shift register}

A \emph{feedback with carry shift register} (FCSR) consists of a feedback register and a memory cell.
It is designed by Klapper and Goresky \cite{Klapper:1993}.
The form of an $r$-stage FCSR is presented in Fig. \ref{f1}, where $q_i\ (1\le i\le r-1)\in \{0, 1\}$, $q_r=1$.
\begin{figure}[]
\begin{center}
\includegraphics[width=0.6\textwidth]{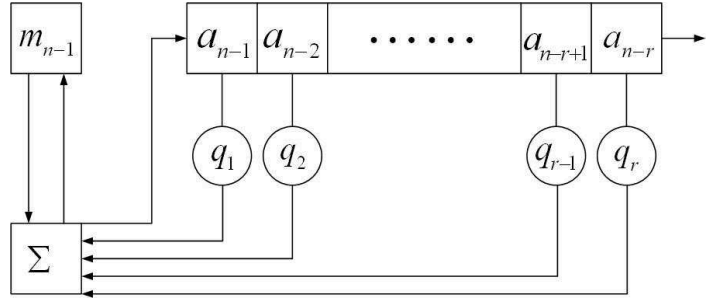}
\end{center}
\caption{Feedback with carry shift register}\label{f1}
\end{figure}
We call $q=\sum_{i=1}^{r}q_i2^i-1$  the \emph{connection number} of this FCSR
and its  operation is defined as follows:
\begin{enumerate}
\item Give an initial state $(a_{r-1}, a_{r-2},\cdots, a_0)$ of the register and $m$ of the memory,
where $a_i\in\{0, 1\}$, $m\in\mathbb{Z}$;
\item Compute an integer sum $\sigma=\sum_{i=0}^{r-1}q_ia_i+m$;
\item Shift the register one step to right with outputting the rightmost bit $a_0$;
\item Put $a_r=(\sigma \mod 2)$ into the leftmost of the register;
\item Put $\frac{\sigma-a_r}{2}$ into the memory;
\item Return to Step 2.
\end{enumerate}

{ The following result about  $2$-adic complexity of binary  sequences was firstly presented by Klapper et. al. \cite{Klapper:1997}.}
\begin{lemma}\label{le2} \cite{Klapper:1997}
\emph{(1)} Let $s$ be a periodic sequence generated by the
 FCSR with connection number $q$.
 Assume that $s=(s_0,s_1,
\cdots)$. Then, in $\mathbb{Q}_2$,  $\sum_{i=0}^{\infty}s_i2^i=\frac{p}{q}$, where $p$ is an integer
such that $-q\le p\le 0$.
Particularly, if $\gcd(p,q)=1$, then this FCSR is the shortest one which can produce $s$ and hence
$AC(s)=\lfloor\log (q+1)\rfloor$. \\
\emph{(2)} Conversely, let $s=(s_0, s_1,\cdots)$ be a binary periodic sequence.
If $\sum_{i=0}^{\infty}s_i2^i=\frac{p}{q}$ in $\mathbb{Q}_2$, then $s$
can be produced by the FCSR with connection number $q$.
\end{lemma}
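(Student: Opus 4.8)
The plan is to work directly from the state-transition rule of the FCSR and convert it into a single identity in $\mathbb{Q}_2$. For part (1), I would first read off from the six operational steps the fundamental recurrence satisfied by the output: writing $m_n$ for the memory content after the $n$-th update, the computation $\sigma=\sum q_ia_i+m$ together with the assignments $a_r=(\sigma\bmod2)$ and new memory $(\sigma-a_r)/2$ says precisely that, once the register has been fully loaded,
\[
\sum_{i=1}^{r}q_is_{n-i}+m_{n-1}=s_n+2m_n\qquad(n\ge r),
\]
the left-hand side being the integer sum at step $n$ and the right-hand side its splitting into the emitted bit plus twice the carry. This is the only mechanical step; it amounts to bookkeeping on the indices in Fig.~\ref{f1}.

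Next I would form the $2$-adic integer $\alpha=\sum_{i=0}^{\infty}s_i2^i$ and evaluate $q\alpha$ with $q=-1+\sum_{i=1}^{r}q_i2^i$. Multiplying the recurrence by $2^n$ and summing over $n\ge r$, the convolution $\sum_iq_is_{n-i}2^n$ reassembles into $(q+1)\alpha$ up to finitely many initial terms, while the memory contributions telescope in $\mathbb{Q}_2$ (using $|2|_2<1$, so that $m_n2^n\to0$) and leave only a single boundary term $-m_{r-1}2^r$. Collecting the finite leftovers yields $q\alpha=p$ for an explicit integer $p$ assembled from the initial state $(a_{r-1},\dots,a_0)$ and the initial memory $m$. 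The bound $-q\le p\le0$ then follows from two observations: the memory of a periodic FCSR stays in a bounded range, and purely periodic $s$ forces $\alpha$ to equal $-P_s(2)/(2^{N}-1)$, a value lying in the real interval $[-1,0]$, so that $p=q\alpha\in[-q,0]$. For the minimality assertion, $\gcd(p,q)=1$ makes $q$ the exact denominator of $\alpha$, so by the rational-approximation theory of \cite{Klapper:1995} no shorter FCSR can generate $s$; since the top tap forces $q_r=1$ we have $2^r\le q+1<2^{r+1}$, whence the register length is $r=\lfloor\log(q+1)\rfloor=AC(s)$.

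For part (2) the construction runs in reverse. Given $\alpha=p/q$ in $\mathbb{Q}_2$, I would recover the tap pattern from the binary expansion $q+1=\sum_{i=1}^{r}q_i2^i$, and then choose the initial register contents and the starting memory so that the integer $p$ reconstructed as in part (1) equals the prescribed numerator. Running this machine reproduces $\sum_{i}s_i2^i=p/q$ by the very same identity, so it outputs $s$.

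I expect the genuine obstacle to be the careful justification of $-q\le p\le0$ rather than the algebra of $q\alpha=p$: this is where periodicity must be used in an essential way to control both the memory range and the boundary terms produced by summing the recurrence only from $n\ge r$ onward, and where a single off-by-one in the index range of Fig.~\ref{f1} would corrupt the value of $p$.
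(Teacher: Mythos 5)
The paper offers no proof of this lemma at all --- it is quoted as a known result from \cite{Klapper:1997} --- so there is no internal argument to compare against; what you have done is reconstruct the original Klapper--Goresky proof, and your reconstruction is essentially sound. The recurrence $\sum_{i=1}^{r}q_is_{n-i}+m_{n-1}=s_n+2m_n$ for $n\ge r$ is the correct reading of the six operational steps, the summation $\sum_{n\ge r}(\cdot)\,2^n$ does telescope as you say (note that boundedness of the memory is not even needed for $2$-adic convergence: $m_n\in\mathbb{Z}$ already gives $|m_n2^n|_2\le 2^{-n}$; boundedness only matters if one wants eventual periodicity of states), and your route to the bound $-q\le p\le 0$ is exactly right: strict periodicity gives $\alpha=P_s(2)/(1-2^N)$ as a rational number in the real interval $[-1,0]$, whence $p=q\alpha\in[-q,0]$. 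For the minimality claim you can avoid the appeal to the rational approximation theory of \cite{Klapper:1995} entirely: apply part (1) to any competing FCSR with connection number $q'$ generating $s$ to get $p'/q'=p/q$; then $\gcd(p,q)=1$ forces $q\mid q'$, so $q'\ge q$ and the register length $\lfloor\log(q'+1)\rfloor\ge\lfloor\log(q+1)\rfloor$, using $2^r\le q+1<2^{r+1}$ exactly as you observed.

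The one genuine loose end is in part (2), at the step ``choose the starting memory so that the reconstructed $p$ equals the prescribed numerator.'' Solving the identity of part (1) for the initial memory gives $m_{r-1}2^r=\sum_{k=0}^{r-1}\bigl(\sum_{i=0}^{k}q_is_{k-i}\bigr)2^k-p$ (with the convention $q_0=-1$), and you must verify that the right-hand side is divisible by $2^r$, otherwise no integer memory exists and the machine cannot be initialized. This does hold, but it requires an argument: since $\alpha\equiv\sum_{j=0}^{r-1}s_j2^j\pmod{2^r}$ in $\mathbb{Z}_2$ and $q\alpha=p$, reducing modulo $2^r$ shows $p\equiv\sum_{k=0}^{r-1}\bigl(\sum_{i=0}^{k}q_is_{k-i}\bigr)2^k\pmod{2^r}$, which is precisely the needed divisibility. (One also needs $q$ odd for $p/q$ to make sense as a $2$-adic expansion with integer numerator data, but that is automatic from the form $q=\sum_{i=1}^{r}q_i2^i-1$.) With that divisibility check supplied, your proof is complete and is the same argument as in the cited source.
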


Let $s$ be a periodic sequence and $\overline{s}$ its complementary sequence. It follows from
Lemma \ref{le2} and the fact $\sum_{i=0}^{\infty}s_i2^i+\sum_{i=0}^{\infty}\overline{s}_i2^i=\sum_{i=0}^{\infty}2^i=-1$ that
$s$ and $\overline{s}$ have the same $2$-adic complexity.  Hence when we refer to an ideal $2$-level autocorrelation sequence, we always assume that,  without loss of generality, its support set is an $(N, \frac{N+1}{2},\frac{N+1}{4})$ cyclic difference set.

\subsection{Linear feedback shift register}
An $r$-stage \emph{linear feedback shift register} (LFSR) over a finite field $\mathbb{F}_q$ is given in
Fig. \ref{f2}, where $q_i\ (1\le i\le r)\in\mathbb{F}_q$, $q_r\ne0$.
\begin{figure}[]
\begin{center}
\includegraphics[width=0.6\textwidth]{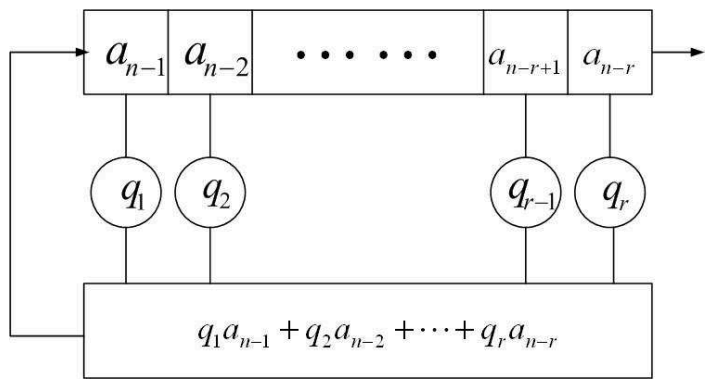}
\end{center}
\caption{Linear feedback shift register}\label{f2}
\end{figure}
We call $f(x)=\sum_{i=1}^{r}q_ix^i-1$   the \emph{connection polynomial} of this LFSR
and its  operation  is defined as follows:
\begin{enumerate}
\item Give an initial state $(a_{r-1}, a_{r-2},\cdots, a_0)$,
where $a_i\in\mathbb{F}_q$;
\item Compute a sum $\sigma=\sum_{i=0}^{r-1}q_ia_i$ over $\mathbb{F}_q$;
\item Shift the register one step to right with outputting the rightmost bit $a_0$;
\item Put $a_r=\sigma$ into the leftmost of the register;
\item Return to Step 2.
\end{enumerate}

The following is a well-known result on the linear complexity of periodic sequences.

\begin{lemma}\label{le7} \cite{Golomb:2005, Lichao:2010}
\emph{(1)} Let $s=(s_0,s_1,\cdots)$ be a periodic sequence generated by the LFSR with connection polynomial $f(x)$.
Then $\sum_{i=0}^{\infty}s_ix^i=\frac{g(x)}{f(x)}$.
Particularly, if $\gcd(g(x),f(x))=1$, then this LFSR is the shortest one which can produce $s$ and hence
$LC(s)=\deg(f(x))$. \\
\emph{(2)} Conversely, let $s=(s_0, s_1,\cdots)$ be a periodic sequence over $\mathbb{F}_q$.
If $\sum_{i=0}^{\infty}s_ix^i=\frac{g(x)}{f(x)}$, then $s$
can be produced by the LFSR with connection polynomial $f(x)$.
\end{lemma}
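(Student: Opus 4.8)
The plan is to follow the same template used for the $2$-adic statement in Lemma~\ref{le2}, replacing the ring of $2$-adic integers by the power-series ring $\mathbb{F}_q[[x]]$ and the connection number by the connection polynomial. First I would read off the linear recurrence that the register enforces: combining Steps~2--4 of the operation, the newly inserted symbol satisfies $s_n=\sum_{i=1}^{r}q_i s_{n-i}$ for all $n\ge r$, with initial data $(s_0,\dots,s_{r-1})=(a_0,\dots,a_{r-1})$. Setting $G(x)=\sum_{i=0}^{\infty}s_i x^i$ and multiplying by $f(x)=\sum_{i=1}^{r}q_ix^i-1$, I would compute the coefficient of $x^n$ in $f(x)G(x)$, which equals $-s_n+\sum_{i=1}^{r}q_i s_{n-i}$. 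The recurrence makes this vanish for every $n\ge r$, so $f(x)G(x)=g(x)$ is a polynomial with $\deg g<r$; this gives $G(x)=g(x)/f(x)$, and the coefficients of $g$ are explicit linear combinations of the initial state.

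For the minimality assertion I would use that $\mathbb{F}_q[x]$ is a unique factorization domain, so $G(x)$ has a unique lowest-terms representation $g_0/f_0$ with $f_0(0)\ne0$. The key observation is that an LFSR of length $\ell$ producing $s$ corresponds precisely to writing $G(x)=\tilde g(x)/\tilde f(x)$ with $\deg\tilde f=\ell$, $\deg\tilde g<\ell$, and $\tilde f(0)\ne0$; since $\gcd(g_0,f_0)=1$, the identity $\tilde g\, f_0=g_0\,\tilde f$ forces $f_0\mid\tilde f$, whence $\ell=\deg\tilde f\ge\deg f_0$. Thus the shortest register has length $\deg f_0$, i.e. $LC(s)=\deg f_0$. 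Under the hypothesis $\gcd(g,f)=1$ the given fraction is already reduced, so $f_0=f$ up to a unit and $LC(s)=\deg f$.

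Part~(2) is the converse direction of the first computation: starting from $G(x)=g(x)/f(x)$ with $\deg g<r=\deg f$, I would clear denominators to get $f(x)G(x)=g(x)$ and read off, from the vanishing of the coefficient of $x^n$ for $n\ge r$, exactly the recurrence $s_n=\sum_{i=1}^{r}q_i s_{n-i}$; loading $(s_0,\dots,s_{r-1})$ as the initial state then realises $s$ on the register with connection polynomial $f$.

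I expect the only genuinely delicate point to be the minimality claim in Part~(1): one must argue that every LFSR generating $s$ yields a denominator divisible by the reduced denominator $f_0$, and must handle the normalisation forced by the connection polynomial, whose constant term is $-1$. The condition $f_0(0)\ne0$ guarantees this normalisation can be arranged by scaling, so no degree is lost. Everything else is routine coefficient bookkeeping that mirrors, step for step, the $2$-adic argument behind Lemma~\ref{le2}.
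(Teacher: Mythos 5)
The paper itself gives no proof of Lemma~\ref{le7}, citing it as a well-known textbook fact from \cite{Golomb:2005, Lichao:2010}, and your argument is precisely the standard generating-function proof those references use: the register recurrence $s_n=\sum_{i=1}^{r}q_is_{n-i}$ makes $f(x)G(x)$ a polynomial of degree less than $r$, and minimality follows from the reduced representation $g_0/f_0$ with $f_0\mid\tilde f$ for any competing denominator. Your treatment is correct and complete, including the two genuinely delicate points (the normalisation $\tilde f(0)=-1$ by scaling, which is why $f_0(0)\ne 0$ matters, and the fact that $\deg\tilde f=\ell$ exactly because the paper's definition requires $q_r\ne 0$); the only remark worth adding is that your extra hypothesis $\deg g<\deg f$ in Part~(2) is automatic for a periodic sequence, since $G(x)=P_s(x)/(1-x^N)$ with $\deg P_s<N$ forces it.
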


\subsection{Gauss sums}
Let $p$ be a prime and let $\psi$  be a multiplicative character of $\mathbb{F}_p$.
Define
$$G(\psi;\alpha)=\sum_{x\in\mathbb{F}_p^*}\psi(x)w_p^{\alpha x}$$
and
$$g(k;\alpha)=\sum_{x\in\mathbb{F}_p}w_p^{\alpha x^k},$$
where $k$ is an integer, $w_p=e^{\frac{2\pi i}{p}}$ is a $p$-th primitive unity of $\mathbb{C}$ and $\alpha\in\mathbb{F}_p$.
Both the above sums are called \emph{Gauss sums} and they are connected by the following results.
\begin{lemma}\cite{Berndt:1997}\label{le_gen1}
Let $\psi$ be a multiplicative character of $\mathbb{F}_p$ with order $k$. Then,
$$g(k;\alpha)=\sum_{j=1}^{k-1}G(\psi^j;\alpha)=\sum_{j=1}^{k-1}\psi^j(\alpha^{-1})G(\psi^j;1).$$
\end{lemma}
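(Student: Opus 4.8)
The plan is to relate the two Gauss-sum expressions through the counting function for $k$-th powers in $\mathbb{F}_p^*$. First I would record the standard fact that, since $\psi$ has order $k$ (so in particular $k \mid p-1$), the characters of $\mathbb{F}_p^*$ whose order divides $k$ are precisely $\psi^0,\psi^1,\dots,\psi^{k-1}$; consequently, for every $y\in\mathbb{F}_p^*$ the number of solutions $x\in\mathbb{F}_p^*$ of $x^k=y$ equals $\sum_{j=0}^{k-1}\psi^j(y)$. This orthogonality-type identity is the engine of the whole argument.

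Next I would split off the $x=0$ term and regroup the sum over $\mathbb{F}_p^*$ according to the value of $y=x^k$:
\[
g(k;\alpha)=1+\sum_{x\in\mathbb{F}_p^*}w_p^{\alpha x^k}=1+\sum_{y\in\mathbb{F}_p^*}\Bigl(\sum_{j=0}^{k-1}\psi^j(y)\Bigr)w_p^{\alpha y}.
\]
Interchanging the two summations turns the inner sum over $y$ into $\sum_{j=0}^{k-1}G(\psi^j;\alpha)$. The $j=0$ term is the trivial-character Gauss sum $\sum_{y\in\mathbb{F}_p^*}w_p^{\alpha y}$, which equals $-1$ for $\alpha\ne0$ because $\sum_{y\in\mathbb{F}_p}w_p^{\alpha y}=0$. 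This $-1$ cancels the leading $1$, leaving exactly $g(k;\alpha)=\sum_{j=1}^{k-1}G(\psi^j;\alpha)$, the first claimed equality.

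For the second equality I would apply the change of variable $u=\alpha x$ in each $G(\psi^j;\alpha)$. Since $\alpha\ne0$, as $x$ runs over $\mathbb{F}_p^*$ so does $u$, and multiplicativity of $\psi^j$ gives $\psi^j(x)=\psi^j(\alpha^{-1})\psi^j(u)$; hence $G(\psi^j;\alpha)=\psi^j(\alpha^{-1})G(\psi^j;1)$, and summing over $j$ yields the stated formula.

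The argument is essentially mechanical once the counting identity is in place, so there is no serious obstacle; the only point demanding care is the standing assumption $\alpha\ne0$ (already implicit in the appearance of $\alpha^{-1}$). Both the evaluation of the trivial-character term as $-1$ and the change of variable break down at $\alpha=0$, so the lemma should be read for $\alpha\in\mathbb{F}_p^*$.
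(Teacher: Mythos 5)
Your proof is correct, and since the paper offers no proof of this lemma at all --- it is quoted directly from Berndt--Evans--Williams \cite{Berndt:1997} --- there is nothing internal to compare against; your argument is precisely the standard one found in that reference: the counting identity $\#\{x\in\mathbb{F}_p^{*}:x^{k}=y\}=\sum_{j=0}^{k-1}\psi^{j}(y)$ (valid because $\psi$ has exact order $k$, so $\psi^{0},\dots,\psi^{k-1}$ exhaust the characters of order dividing $k$), interchange of summation, cancellation of the trivial-character term $G(\psi^{0};\alpha)=-1$ against the $x=0$ contribution, and the substitution $u=\alpha x$ for the second equality. Your closing caveat is also well taken: the identity genuinely requires $\alpha\neq 0$ (for $\alpha=0$ one has $g(k;0)=p$ while $G(\psi^{j};0)=0$ for $1\le j\le k-1$, and $\psi^{j}(\alpha^{-1})$ is undefined), a hypothesis the paper leaves implicit but which holds in all of its applications, where $\alpha$ is always a power of a primitive element of $\mathbb{F}_p$ and hence nonzero.
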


\begin{lemma}\cite{Berndt:1997}\label{le_gen2}
Assume that $p\equiv 1\mod 4$. One has \\
\emph{(1)} If $\psi$ is the quadratic character of $\mathbb{F}_p$, then $G(\psi;1)=g(2;1)=\sqrt{p}$;\\
\emph{(2)} If $\psi$ is a character of order $4$,
 then $$G(\psi;1)+G(\psi^3;1)=\pm\left\{2\left(\frac{2}{p}\right)(p+a\sqrt{p})\right\}^{1/2},$$
 where {$\left(\frac{2}{p}\right)\equiv 2^{\frac{p-1}{2}} \mod p$ is the Legendre symbol,} $a$ is an integer such that $a^2+b^2=p$, $a\equiv-\left(\frac{2}{p}\right)(\mod4)$;\\
 \emph{(3)} If $\psi$ is a nontrivial character, then $|G(\psi;1)|=\sqrt{p}$.
\end{lemma}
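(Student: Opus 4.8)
The plan is to prove the three parts in the order \emph{(3)}, \emph{(1)}, \emph{(2)}, since each relies on its predecessor. For \emph{(3)}, I would expand $|G(\psi;1)|^2 = G(\psi;1)\overline{G(\psi;1)} = \sum_{x,y\in\mathbb{F}_p^*}\psi(x)\overline{\psi(y)}\,w_p^{x-y}$, substitute $x=ty$ with $t\in\mathbb{F}_p^*$, and sum over $y$ using orthogonality of the additive characters: the inner sum $\sum_{y}w_p^{y(t-1)}$ equals $p-1$ when $t=1$ and $-1$ otherwise. Since $\psi$ is nontrivial we have $\sum_{t\ne1}\psi(t)=-1$, and the two contributions combine to give $|G(\psi;1)|^2=p$, hence $|G(\psi;1)|=\sqrt{p}$.

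For \emph{(1)} I would first link the two kinds of Gauss sum. Writing $g(2;1)=\sum_{x\in\mathbb{F}_p}w_p^{x^2}$ and noting that each nonzero $y$ is hit by $1+\psi(y)$ values of $x$ (where $\psi$ is the quadratic character), a short computation collapses the sum to $g(2;1)=G(\psi;1)$. Its modulus is $\sqrt{p}$ by \emph{(3)}, so it only remains to fix the sign. This is precisely Gauss's classical theorem on the sign of the quadratic Gauss sum, namely that the value is $+\sqrt{p}$ exactly when $p\equiv1\bmod4$; I would invoke it (it admits proofs via the functional equation of the theta function or Schur's eigenvalue argument for the associated DFT matrix), and I expect this sign determination to be the first genuine obstacle.

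Part \emph{(2)} is the crux. Writing $\eta=\psi^2$ for the quadratic character and $J(\chi,\lambda)=\sum_{x}\chi(x)\lambda(1-x)$ for the Jacobi sum, I would use the standard identity $G(\psi;1)^2=J(\psi,\psi)\,G(\eta;1)$, valid because $\eta$ is nontrivial, together with $G(\eta;1)=\sqrt{p}$ from \emph{(1)}. Since $\psi^3=\overline{\psi}$, the cross term is governed by $G(\psi;1)\,G(\psi^3;1)=\psi(-1)\,p$, which follows from $\overline{G(\psi;1)}=\psi(-1)G(\overline{\psi};1)$ and \emph{(3)}; moreover a short check gives $\psi(-1)=(-1)^{(p-1)/4}=\left(\frac{2}{p}\right)$. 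Using $J(\psi^3,\psi^3)=\overline{J(\psi,\psi)}$ to expand the square then yields
\[
\bigl(G(\psi;1)+G(\psi^3;1)\bigr)^2=2\left(\tfrac{2}{p}\right)p+2\sqrt{p}\,\mathrm{Re}\,J(\psi,\psi).
\]

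The final and hardest input is the arithmetic of the quartic Jacobi sum. Since $\psi,\psi,\eta$ are all nontrivial one has $|J(\psi,\psi)|^2=p$, so writing $J(\psi,\psi)=a+bi$ gives $a^2+b^2=p$; the decisive point is the \emph{normalization} $\mathrm{Re}\,J(\psi,\psi)=\left(\frac{2}{p}\right)a$ with $a\equiv-\left(\frac{2}{p}\right)\bmod4$, which I would extract from the primary normalization of $J(\psi,\psi)$ as a Gaussian integer and the congruences it forces on $a$ and $b$. Substituting turns the displayed identity into $\bigl(G(\psi;1)+G(\psi^3;1)\bigr)^2=2\left(\frac{2}{p}\right)(p+a\sqrt{p})$, and extracting a square root produces the stated $\pm$. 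I expect this Jacobi-sum normalization to be the main obstacle, and the surviving sign ambiguity $\pm$ honestly reflects that the elementary manipulations above do not pin down the sign of the quartic Gauss sum itself.
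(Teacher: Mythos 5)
The paper does not prove this lemma: it is quoted directly from Berndt--Evans--Williams \cite{Berndt:1997}, so there is no in-paper argument to compare against, and your sketch is essentially the standard proof from that literature. It is correct as outlined: parts (3) and (1) are the routine computations you describe (with Gauss's sign theorem for $g(2;1)$ legitimately invoked); in part (2) the identities $G(\psi;1)^2=J(\psi,\psi)\,G(\psi^2;1)$, $G(\psi;1)G(\psi^3;1)=\psi(-1)p$, and $\psi(-1)=(-1)^{(p-1)/4}=\left(\frac{2}{p}\right)$ all check out, and the normalization you isolate, $\mathrm{Re}\,J(\psi,\psi)=\left(\frac{2}{p}\right)a$ with $a\equiv-\left(\frac{2}{p}\right)\pmod{4}$, is exactly equivalent to the classical congruence $J(\psi,\psi)\equiv-1\pmod{2(1+i)}$ in $\mathbb{Z}[i]$ (sanity checks: $p=5$ gives $J=-1-2i$, $p=13$ gives $\mathrm{Re}\,J=3$), which you rightly flag as the deep input. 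Since the lemma only asserts the value up to $\pm$, you never need the sign of the quartic Gauss sum itself, so citing those two classical facts genuinely completes the argument.
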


\section{ A new method of computing {the} $2$-adic complexity of binary sequences}

In this section, we will present a new method of computing {the} $2$-adic complexity of binary sequences.
The following is {a} key lemma of our method.

\begin{lemma}\label{le4}
Let $s=(s_0, s_1, \cdots, s_{N-1})$ be a binary  sequence with period $N$ and let $P_{ s }(x)=\sum_{i=0}^{N-1}s_ix^i\in \mathbb{Z}[x]$.
Let $A=(a_{i,j})_{N\times N}$
be the matrix defined by $a_{i,j}=s_{(i-j)\mod N}$,
and let us view $A$ as a matrix over $\mathbb{Q}$.
If $\det (A)\neq 0$, then there exist $u(x),\ v(x)\in\mathbb{Z}[x]$ such that

\begin{equation}\label{eq_ge}
  u(x)P_{ s }(x)+v(x)(1-x^N)=\det(A),
\end{equation}
where $\deg u\le N-1$, $\deg v\le N-2$.
\end{lemma}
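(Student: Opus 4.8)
The plan is to exploit the fact that $A$ is a circulant matrix, equivalently that $A$ is the matrix of the multiplication-by-$P_s(x)$ operator on the ring $R=\mathbb{Z}[x]/(x^N-1)$. Indeed, for any integer vector $c=(c_0,\dots,c_{N-1})^T$, the $i$-th entry of $Ac$ is $\sum_{j}a_{i,j}c_j=\sum_{j}s_{(i-j)\bmod N}c_j$, which is exactly the coefficient of $x^i$ in the product $P_s(x)\bigl(\sum_j c_jx^j\bigr)$ reduced modulo $x^N-1$. Thus multiplying a coefficient vector by $A$ corresponds to multiplying the associated polynomial by $P_s(x)$ in $R$. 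This dictionary is the device that will convert a matrix identity into the desired polynomial Bezout relation.

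First I would invoke the adjugate identity $A\cdot\operatorname{adj}(A)=\det(A)\,I$. Since $A$ has entries in $\mathbb{Z}$, its adjugate $\operatorname{adj}(A)$ also has entries in $\mathbb{Z}$; in particular its first column $b=(b_0,\dots,b_{N-1})^T$ is an integer vector satisfying $Ab=\det(A)\,e_0$, where $e_0=(1,0,\dots,0)^T$. I then set $u(x)=\sum_{i=0}^{N-1}b_ix^i\in\mathbb{Z}[x]$, which automatically has $\deg u\le N-1$. Under the dictionary above, the vector identity $Ab=\det(A)\,e_0$ translates precisely into the congruence
\begin{equation}
u(x)P_s(x)\equiv\det(A)\pmod{x^N-1},\nonumber
\end{equation}
because the right-hand side $\det(A)\,e_0$ corresponds to the constant polynomial $\det(A)$.

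Next I would produce $v(x)$ by division. The congruence says $u(x)P_s(x)-\det(A)$ is divisible by $x^N-1$; since $x^N-1$ is monic with integer coefficients, the quotient $v(x):=\bigl(u(x)P_s(x)-\det(A)\bigr)/(x^N-1)$ again lies in $\mathbb{Z}[x]$. Rearranging gives exactly $u(x)P_s(x)+v(x)(1-x^N)=\det(A)$, which is \eqref{eq_ge}. The degree bound on $v$ then follows by comparing degrees in $v(x)(1-x^N)=\det(A)-u(x)P_s(x)$: the right-hand side has degree at most $(N-1)+(N-1)=2N-2$, while the left-hand side has degree $N+\deg v$, so $\deg v\le N-2$. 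Note that the hypothesis $\det(A)\neq0$ is not needed to derive the identity (the adjugate relation holds unconditionally); it only guarantees that the conclusion is nontrivial, which is what matters for later applications.

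The step I expect to be the main point to handle carefully is \emph{integrality}: one must argue both that $\operatorname{adj}(A)$ has integer entries (so $u\in\mathbb{Z}[x]$) and that dividing by the monic polynomial $x^N-1$ keeps the quotient in $\mathbb{Z}[x]$ (so $v\in\mathbb{Z}[x]$). Both are standard but are the crux of why we get an identity over $\mathbb{Z}$ rather than merely over $\mathbb{Q}$. As a consistency check, the circulant eigenvalue formula gives $\det(A)=\prod_{k=0}^{N-1}P_s(\omega^k)=\operatorname{Res}(x^N-1,P_s(x))$ with $\omega=e^{2\pi i/N}$, so the produced identity is just the resultant lying in the ideal $(P_s,\,1-x^N)\subseteq\mathbb{Z}[x]$ with the classical degree bounds on the cofactors; the adjugate route above has the advantage of exhibiting $u$ and $v$ explicitly and making the degree bookkeeping transparent.
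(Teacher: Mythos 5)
Your proof is correct, and it reaches the identity by a genuinely different route from the paper. The paper writes out the coefficient-matching conditions for $u(x)P_s(x)+v(x)(1-x^N)=\det(A)$ as a $(2N-1)\times(2N-1)$ linear system in the unknowns $u_0,\dots,u_{N-1},v_0,\dots,v_{N-2}$, shows by adding the last $N-1$ rows to the first $N-1$ rows that the coefficient matrix $C$ (essentially a Sylvester-type matrix for $P_s(x)$ and $1-x^N$) satisfies $\det(C)=\pm\det(A)\neq0$, and then deduces integrality of the unique solution from the special right-hand side $(\det(A),0,\dots,0)^{\text{T}}$ together with $\det(C)=\pm\det(A)$ --- in effect the same adjugate/Cramer integrality you invoke, but applied to the large matrix $C$. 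You instead work with the $N\times N$ circulant itself: identifying $A$ with multiplication by $P_s(x)$ on $\mathbb{Z}[x]/(x^N-1)$, extracting $u$ from the first column of $\operatorname{adj}(A)$, and recovering $v$ by division by the monic polynomial $x^N-1$. This buys you two things: the adjugate is taken of an $N\times N$ rather than a $(2N-1)\times(2N-1)$ matrix, and the degree bounds on $u$ and $v$ fall out of the construction automatically instead of being encoded in the shape of the system. Your remark that the identity itself does not need $\det(A)\neq0$ is also accurate, whereas the paper's argument genuinely uses $\det(A)\neq0$ to invert $C$. It is worth noticing that the paper's row operation --- adding the last $N-1$ rows of $C$ to the first $N-1$ --- is precisely the matrix-level shadow of your ``reduce modulo $x^N-1$'' dictionary, since it makes the circulant $A$ appear as the top-left block; and your closing observation that $\det(A)=\operatorname{Res}(x^N-1,P_s(x))$, so that the lemma is the classical fact that a resultant lies in the ideal generated by the two polynomials in $\mathbb{Z}[x]$ with the standard cofactor degree bounds, is a conceptual summary the paper never makes explicit.
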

\begin{proof}
It suffices to prove that
the following equation system has a  solution $(u_0,u_1,\cdots,u_{N-1},\\v_0,v_1,\cdots,v_{N-2})^{\text{T}}\in\mathbb{Z}^{2N-1}$, where $u_i$ and $v_i$ are the coefficients of $u(x)$ and $v(x)$ respectively.
\begin{equation}\label{eq3}
\left\{
\begin{aligned}
s_0u_0&+v_0&&=\det(A)\\
s_1u_0+s_0u_1&+v_1&&=0\\
&\vdots&&=\vdots\\
\sum_{i=0}^{N-2}s_{N-2-i}u_i&+v_{N-2}&&=0\\
\sum_{i=0}^{N-1}s_{N-1-i}u_i&&&=0\\
\sum_{i=1}^{N-1}s_{N-i}u_i&-v_0&&=0\\
\sum_{i=2}^{N-1}s_{N+1-i}u_i&-v_1&&=0\\
&\vdots&&=\vdots\\
s_{N-1}u_{N-1}&-v_{N-2}&&=0
\end{aligned}
\right.
.\end{equation}
The coefficient matrix $C$ of the above equation system is
\begin{displaymath}
C=\left(\begin{array}{cccccccc}
s_0 & 0 & \cdots & 0 & 1 & 0 & \cdots & 0 \\
s_1 & s_0 & \cdots & 0 & 0 & 1 & \cdots & 0 \\
\vdots & \vdots& \cdots & \vdots & \vdots & \vdots & \cdots & \vdots \\
s_{N-2} & s_{N-3} & \cdots & 0 & 0 & 0 & \cdots & 1 \\
s_{N-1} & s_{N-2} & \cdots & s_0 & 0 & 0 & \cdots & 0 \\
0 & s_{N-1} & \cdots & s_1 & -1& 0 & \cdots & 0 \\
\vdots & \vdots& \cdots & \vdots & \vdots & \vdots & \cdots & \vdots \\
0 & 0 & \cdots & s_{N-1} & 0& 0 & \cdots & -1
\end{array}\right).
\end{displaymath}
Adding the last $(N-1)$ rows of $C$ on the first $(N-1)$ rows, we get a new matrix
\begin{displaymath}
C'=\left(\begin{array}{cccccccc}
s_0 & s_{N-1} & \cdots & s_1 & 0 & 0 & \cdots & 0 \\
s_1 & s_0 & \cdots & s_2 & 0 & 0 & \cdots & 0 \\
\vdots & \vdots& \cdots & \vdots & \vdots & \vdots & \cdots & \vdots \\
s_{N-2} & s_{N-3} & \cdots & s_{N-1} & 0 & 0 & \cdots & 0 \\
s_{N-1} & s_{N-2} & \cdots & s_0 & 0 & 0 & \cdots & 0 \\
0 & s_{N-1} & \cdots & s_1 & -1& 0 & \cdots & 0 \\
\vdots & \vdots& \cdots & \vdots & \vdots & \vdots & \cdots & \vdots \\
0 & 0 & \cdots & s_{N-1} & 0& 0 & \cdots & -1
\end{array}\right).
\end{displaymath}
Then we have $\det(C)=\det(C')=\det(A)(-1)^{N-1}=\pm\det(A)\ne 0$.
Hence Equation (\ref{eq3}) has a unique solution $\alpha=(u_0,\cdots,u_{N-1}, v_0, \cdots, v_{N-2})^{\text{T}}=C^{-1}\beta$,
 where $\beta=(\det(A), 0,\cdots, 0)^{\text{T}}$. Noting that $C$ is a matrix over $\mathbb{Z}$ and $\det(C)= \pm\det(A)$,
 we have $\alpha=C^{-1}\beta\in \mathbb{Z}^{2N-1}$.
 We finish the proof.
$\hfill\blacksquare$\end{proof}

The following is our first main result on {the} 2-adic complexity of binary periodic sequence.

\begin{theorem}\label{th_gen}
Let the symbols be defined as in Lemma \ref{le4}. If $\gcd(1-2^N, \det(A))=1$,
then  $AC(s)=N$.
\end{theorem}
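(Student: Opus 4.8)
The plan is to link the circulant determinant $\det(A)$ to the $2$-adic expansion of $s$ via Lemma \ref{le2}, and then to exploit the B\'ezout-type identity of Lemma \ref{le4} to force a coprimality condition. First I would compute the $2$-adic value of the sequence. Since $s$ has period $N$, expanding in $\mathbb{Q}_2$ gives
$$\sum_{i=0}^{\infty} s_i 2^i = \left(\sum_{j=0}^{N-1} s_j 2^j\right)\sum_{k=0}^{\infty} 2^{kN} = \frac{P_s(2)}{1-2^N} = \frac{-P_s(2)}{2^N-1}.$$
Because $0\le P_s(2)\le 2^N-1$, the numerator $-P_s(2)$ lies in $[-(2^N-1),0]$, so this fraction is already in the normalized form required by Lemma \ref{le2}, with denominator $2^N-1$. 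After cancelling common factors, the connection number of the shortest FCSR generating $s$ is $q=(2^N-1)/d$, where $d=\gcd(P_s(2),2^N-1)$, and Lemma \ref{le2} then yields $AC(s)=\lfloor\log(q+1)\rfloor$.

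This reduces the theorem to proving that $d=\gcd(P_s(2),2^N-1)=1$; for then $q=2^N-1$ and $AC(s)=\lfloor\log(2^N)\rfloor=N$. To establish coprimality I would evaluate the polynomial identity of Lemma \ref{le4} at $x=2$, obtaining
$$u(2)P_s(2)+v(2)(1-2^N)=\det(A).$$
Here the hypothesis $\gcd(1-2^N,\det(A))=1$ first guarantees $\det(A)\ne 0$ (otherwise the gcd would equal $2^N-1>1$), so Lemma \ref{le4} indeed applies. Now any common divisor of $P_s(2)$ and $2^N-1$ divides the left-hand side, hence divides $\det(A)$; therefore $d\mid\gcd(\det(A),2^N-1)=1$, forcing $d=1$.

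The argument is short, so I do not anticipate a serious obstacle; the only points requiring care are the normalization of the fraction $-P_s(2)/(2^N-1)$ so that its sign and range match the convention of Lemma \ref{le2}, and the remark that the stated gcd hypothesis rules out the degenerate case $\det(A)=0$. The conceptual heart of the proof is the observation that Equation (\ref{eq_ge}), read as a relation over the integers, transfers the coprimality of $\det(A)$ with $2^N-1$ onto the coprimality of $P_s(2)$ with $2^N-1$, which is exactly the quantity controlling the $2$-adic complexity.
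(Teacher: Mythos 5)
Your proof is correct and takes essentially the same route as the paper: both evaluate the B\'ezout-type identity of Lemma \ref{le4} at $x=2$ to obtain $u(2)P_s(2)+v(2)(1-2^N)=\det(A)$, transfer the hypothesis $\gcd(1-2^N,\det(A))=1$ to $\gcd(P_s(2),2^N-1)=1$, and conclude via Lemma \ref{le2}. The only difference is expository: you spell out the expansion $\sum_{i=0}^{\infty}s_i2^i=-P_s(2)/(2^N-1)$, its normalization, and the fact that the gcd hypothesis rules out $\det(A)=0$, all of which the paper leaves implicit.
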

\begin{proof}
Since $\gcd(1-2^N, \det(A))=1$, we have $\det(A)\ne 0$.
According to Lemma \ref{le4}, there exist $u(x), v(x)\in\mathbb{Z}[x]$ such that
\begin{equation}
\label{eq1}
u(x)P_s(x)+v(x)(1-x^N)=\det(A).
\end{equation}
Substituting $x=2$ into the above equation and letting $M=P_s(2)$, we have
\begin{equation}
\label{eqM2N}
u(2)M+v(2)(1-2^N)=\det(A).
\end{equation}
Hence we have $\gcd(M, 1-2^N)=1$ since $\gcd(1-2^N, \det(A))=1$.
The result then follows from Lemma \ref{le2}.
$\hfill\blacksquare$\end{proof}

Before processing further discussions, we make two remarks on Theorem \ref{th_gen}.
 Firstly, let $d_1=\gcd(M, 1-2^N)$ and $ d_2=\gcd(1-2^N, \det(A))$.
 Then it follows from \eqref{eqM2N} that $d_2$ is divided by
$d_1$. Hence by Lemma \ref{le2}, the smallest connection number $q$ of $s$ is $\frac{1-2^N}{d_1}$,
which is lower bounded by $\frac{1-2^N}{d_2}$. Thus one can get a lower bound on $q$ and consequently a lower bound
 on the $2$-adic complexity of $s$ if $d_2=\gcd(1-2^N, \det(A))\neq 1$.
It is a more general result than Theorem \ref{th_gen}. However, for simplicity,
we would like to keep Theorem \ref{th_gen} as its present
form. Secondly, it is clear that Theorem \ref{th_gen} can be naturally generalized to 
$p$-ary sequences. However, we focus on binary sequences in the present paper.

Theorem \ref{th_gen} provides a new method to compute  {the} 2-adic complexity of binary sequences.
The key point of this method is to compute $\det (A)$ and then verify whether $\gcd(2^N-1, \det(A)) = 1$
, {where  $A$ is the circulant matrix constructed from
the sequence. According to linear algebra, $\det (A)$ can be computed as follows. }

\begin{lemma}\label{le_gen3}\cite{Davis:1994}
Let $s$ be a sequence with period $N$ and let $A=(a_{i,j})_{N\times N}$
be the matrix defined by $a_{i,j}=s_{(i-j)\mod N}$.
Then $\det(A)=\prod_{j=0}^{N-1} P_s(w_N^j)$, where $w_N=e^{\frac{2\pi i}{N}}$ is an $N$-th
primitive unity of $\mathbb{C}$.
\end{lemma}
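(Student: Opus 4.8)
The plan is to prove the classical formula $\det(A)=\prod_{j=0}^{N-1}P_s(w_N^j)$ for the circulant matrix $A$ by diagonalizing $A$ simultaneously with the basic cyclic shift. First I would introduce the $N\times N$ cyclic permutation matrix $\Pi$ whose $(i,j)$ entry is $1$ when $i\equiv j+1\pmod N$ and $0$ otherwise; this is exactly the circulant built from the shifted unit sequence. A direct inspection of the entries shows that $A=\sum_{k=0}^{N-1}s_k\,\Pi^k$, since the coefficient of $\Pi^k$ contributes $s_k$ precisely in positions where $i-j\equiv k\pmod N$, matching the definition $a_{i,j}=s_{(i-j)\bmod N}$. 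In other words, $A=P_s(\Pi)$, the sequence polynomial evaluated at the shift matrix.

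The key step is then to diagonalize $\Pi$. Because $\Pi^N=I$ and $\Pi$ is a companion-type permutation matrix, its eigenvalues are exactly the $N$-th roots of unity $w_N^j$ for $0\le j\le N-1$, with eigenvectors $v_j=(1,w_N^j,w_N^{2j},\dots,w_N^{(N-1)j})^{\mathrm{T}}$; one checks $\Pi v_j=w_N^j v_j$ by a one-line computation using the cyclic structure. These eigenvectors are the columns of the (nonsingular) Fourier matrix, so $\Pi=F\,\Lambda\,F^{-1}$ with $\Lambda=\operatorname{diag}(w_N^0,w_N^1,\dots,w_N^{N-1})$. Substituting into $A=P_s(\Pi)$ and using that polynomial evaluation commutes with conjugation gives $A=F\,P_s(\Lambda)\,F^{-1}=F\,\operatorname{diag}\bigl(P_s(w_N^0),\dots,P_s(w_N^{N-1})\bigr)\,F^{-1}$.

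Finally I would take determinants. Since similar matrices have equal determinants, $\det(A)=\det\bigl(P_s(\Lambda)\bigr)=\prod_{j=0}^{N-1}P_s(w_N^j)$, which is the claimed identity. The only genuinely delicate point is the diagonalizability of $\Pi$, i.e. that its $N$ eigenvectors $v_j$ are linearly independent; this is immediate because the Fourier matrix $F=(w_N^{jk})_{0\le j,k\le N-1}$ is a Vandermonde matrix in the distinct nodes $w_N^0,\dots,w_N^{N-1}$ and hence invertible, but it is worth stating explicitly so that the similarity transformation $F^{-1}\Pi F=\Lambda$ is justified. Everything else is routine linear algebra over $\mathbb{C}$, and since this is a standard fact about circulant matrices one could alternatively just cite \cite{Davis:1994} as the statement already does. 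I expect the main obstacle to be purely bookkeeping: confirming that the index conventions in $a_{i,j}=s_{(i-j)\bmod N}$ produce $A=\sum_k s_k\Pi^k$ with the correct power of $\Pi$ rather than its transpose, which merely permutes the eigenvalues and leaves the product over all $j$ unchanged.
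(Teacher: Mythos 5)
Your proof is correct; the paper itself gives no argument for this lemma, simply citing \cite{Davis:1994}, and your diagonalization of the cyclic shift $\Pi$ via the Fourier (Vandermonde) matrix, combined with $A=P_s(\Pi)$, is exactly the standard proof found in that reference. The one indexing wrinkle you already flag yourself — with your convention one actually gets $\Pi v_j=w_N^{-j}v_j$, so the eigenvalue attached to $v_j$ is $w_N^{-j}$ rather than $w_N^j$ — is harmless, since the eigenvalues still run exactly once over all $N$-th roots of unity and the product $\prod_{j=0}^{N-1}P_s(w_N^j)$ is unchanged.
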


It is clear that $P_s(1)=\sum_{i=0}^{N-1}s_i=|D_s|$. For $1\leq j\leq N-1$,
we have
$$P_s(w_N^j)=\sum_{i=0}^{N-1}s_i (w_N^j)^i=\sum_{i\in D_s} (w_N^j)^i.$$

Hence computing $P_s(w_N^j)$ is related to some exponential sums. If the corresponding  exponential sums can be
computed, then one can compute $\det(A)$ and check whether $\gcd(2^N-1, \det(A)) = 1$ holds.
This is the case  of Legendre Sequence, Ding-Helleseth-Lam Sequence and Ding-Helleseth-Martinsen Sequence,
as we will see in Section 4.2.
On the other hand, if the exponential sums can not be easily computed, we may use other methods to compute $\det(A)$.
This is the case of all the known
binary sequences with ideal 2-level autocorrelation, as we will see in Section 4.1.

\section{Determining 2-adic complexities of several binary sequences with Theorem \ref{th_gen} }

In this section, as applications of our new method, we will determine the 2-adic complexities
of many binary sequences. They are examples of the two cases discussed in the last section.

\subsection{All the known binary sequences with ideal $2$-level autocorrelation}

In this subsection, we will use
 Theorem \ref{th_gen}  to uniformly determine {the}
 2-adic complexities of all the known binary ideal $2$-level autocorrelation sequences.
Two lemmas will be needed. The first one is  a well-known result from linear
algebra.

\begin{lemma}\label{le8} \cite{Eves:1980}
Let $B=(b_{i,j})_{n\times n}$ be a matrix defined by
\begin{equation}
b_{i,j}=\left\{
\begin{aligned}
&x,\ \text{if } i=j;\\
&y,\ \text{if } i \ne j.
\end{aligned}
\right.\nonumber
\end{equation}
Then $\det(B)=(x+(n-1)y)(x-y)^{n-1}$.
\end{lemma}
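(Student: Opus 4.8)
The plan is to recognize $B$ as a rank-one perturbation of a scalar matrix. Writing $I$ for the $n\times n$ identity and $J$ for the all-ones matrix, one has $B=(x-y)I+yJ$, since the diagonal entries equal $(x-y)+y=x$ and the off-diagonal entries equal $0+y=y$. The determinant is then most quickly obtained from the eigenvalues of $J$.

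The matrix $J$ has rank one: its column space is spanned by the all-ones vector $\mathbf{1}=(1,\ldots,1)^{\text{T}}$, and $J\mathbf{1}=n\mathbf{1}$, so $n$ is an eigenvalue with eigenvector $\mathbf{1}$. The remaining eigenvalue is $0$ with multiplicity $n-1$, realized on the hyperplane $\{v:\mathbf{1}^{\text{T}}v=0\}$ of vectors whose coordinates sum to zero. Since $B=(x-y)I+yJ$, every eigenvector of $J$ is an eigenvector of $B$ with the eigenvalue shifted accordingly: $\mathbf{1}$ yields $(x-y)+yn=x+(n-1)y$, while each of the $n-1$ independent vectors summing to zero yields $(x-y)+y\cdot 0=x-y$. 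As the determinant equals the product of the eigenvalues, this gives $\det(B)=(x+(n-1)y)(x-y)^{n-1}$.

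Alternatively, and avoiding any talk of eigenvalues, I would prove the identity by elementary column and row operations, which has the advantage of remaining valid over any commutative ring, so that one may later substitute polynomial or integer values for $x,y$. First add columns $2,\ldots,n$ to the first column; every entry of the new first column becomes $x+(n-1)y$, which I factor out, leaving an all-ones first column and the other columns unchanged. Next subtract the new first row from each of rows $2,\ldots,n$; this zeros out everything below the diagonal and leaves $x-y$ in each diagonal position $(i,i)$ for $i\ge 2$, producing an upper-triangular matrix with diagonal $(1,x-y,\ldots,x-y)$. Its determinant is $(x-y)^{n-1}$, and multiplying back the factored scalar yields the claim.

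There is no genuine obstacle here: the statement is a classical identity and both routes are short. The only points demanding a little care are confirming that the eigenvalue $0$ of $J$ really has multiplicity $n-1$ — which holds because $J$ is symmetric, hence diagonalizable, and has rank $1$ — and, in the row-reduction proof, performing the operations in the correct order so that subtracting the first row after the column step indeed clears everything below the diagonal. Since the paper will apply this lemma with $x$ and $y$ replaced by concrete integers arising from a difference set, the column/row-operation proof is arguably preferable, as it makes no appeal to the field being $\mathbb{R}$ or $\mathbb{C}$.
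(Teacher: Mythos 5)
Your proposal is correct, but note that the paper itself offers no proof of this lemma at all: it is stated as a well-known fact from linear algebra and dispatched with a citation to Eves's \emph{Elementary Matrix Theory}, so there is no in-paper argument to compare against. Both of your routes are standard and sound. In the eigenvalue route, the decomposition $B=(x-y)I+yJ$ and the spectrum of $J$ (eigenvalue $n$ on $\mathbf{1}$, eigenvalue $0$ with multiplicity $n-1$ on the sum-zero hyperplane, justified by symmetry and rank one) correctly give $\det(B)=(x+(n-1)y)(x-y)^{n-1}$; strictly speaking this argument lives over $\mathbb{R}$ or $\mathbb{C}$, which suffices for the paper's application, since the lemma is invoked in the proof of Theorem~\ref{th_gen2} for the matrix $B=A^{\mathrm{T}}A$ with rational entries $\frac{N+1}{2}$ and $\frac{N+1}{4}$, viewed over $\mathbb{Q}\subset\mathbb{C}$. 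Your second, row/column-operation proof is the more robust one: adding columns $2,\dots,n$ to the first column produces the common entry $x+(n-1)y$, factoring it out and subtracting the first row from the others yields an upper-triangular matrix with diagonal $(1,x-y,\dots,x-y)$, and the identity then holds over any commutative ring, in particular as a polynomial identity in $\mathbb{Z}[x,y]$ from which every specialization follows. Your closing judgment that the elementary proof is preferable for the paper's integer-valued application is reasonable, though by the polynomial-identity observation the eigenvalue proof also extends to full generality; either version would serve as a complete substitute for the paper's bare citation.
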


\begin{lemma}\label{le5}
\emph{(1)} Let $p$ be an odd prime. If $q$ is a prime factor of $(2^p-1)$, then $q\ge (p+2)$.\\
\emph{(2)} Let $N=p(p+2)$, where $p$ and $p+2$ both are odd primes. If $q$ is a prime factor of $(2^N-1)$, then $q\ge (p+2)$.
\end{lemma}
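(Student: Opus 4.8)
The plan is to control every prime factor $q$ of $2^{n}-1$ (with $n=p$ in part~(1) and $n=N$ in part~(2)) through the multiplicative order of $2$ modulo $q$. First I would note that $2^{n}-1$ is odd, so any prime divisor $q$ is odd; in particular $\gcd(2,q)=1$ and the order $d$ of $2$ modulo $q$ is well defined. From $q\mid 2^{n}-1$ we have $2^{n}\equiv 1\pmod q$, whence $d\mid n$, while Fermat's little theorem gives $d\mid q-1$. The whole argument then reduces to reading off the admissible values of $d$ from the factorisation of $n$ and translating $d\mid q-1$ into a lower bound on $q$.

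For part~(1), $n=p$ is prime, so $d\in\{1,p\}$. The value $d=1$ would give $2\equiv 1\pmod q$, i.e. $q\mid 1$, which is impossible; hence $d=p$ and $p\mid q-1$, so $q=kp+1$ for some integer $k\ge 1$. The decisive point is a parity observation: since $p$ is odd and $q$ must be odd, $kp+1$ is odd only when $k$ is even, forcing $k\ge 2$ and therefore $q\ge 2p+1\ge p+2$.

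For part~(2), $n=N=p(p+2)$ has exactly the divisors $1,\,p,\,p+2,\,N$, so $d$ is one of these four values and I would treat them in turn. As before $d=1$ is impossible. If $d=p$ then $q\equiv 1\pmod p$ and the same parity argument yields $q\ge 2p+1$; if $d=p+2$ then $q\equiv 1\pmod{p+2}$ and, because $p+2$ is also odd, parity gives $q\ge 2(p+2)+1$; finally $d=N$ gives $q\ge N+1$. In each case $q\ge p+2$, as required. The only genuine subtlety---and the step I would flag as the crux---is the parity refinement: the plain order argument only delivers $q\ge p+1$, but $p+1$ is even while $q$ is odd, so that value is excluded and the bound improves to the stated $p+2$ (in fact to $2p+1$).
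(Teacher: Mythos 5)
Your proof is correct and takes essentially the same approach as the paper: both arguments reduce matters to the multiplicative order of $2$ modulo $q$, note that it divides both the exponent ($p$ or $p(p+2)$) and $q-1$, rule out order $1$, and conclude $q\ge p+2$. The only difference is presentational --- the paper states ``thus $q\ge p+2$'' without spelling out why $q=p+1$ is excluded, whereas you make the parity step ($q$ odd, $p+1$ even, hence $k$ even and $q\ge 2p+1$) explicit; the paper itself records exactly this refinement later, in the proof of Lemma~\ref{le_DHL}.
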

\begin{proof}
We only give a proof for $(2)$. The proof for $(1)$ is similar and is left to the interested readers.
We regard $2$ as an element of $\mathbb{F}_q$, and denote by ord$(2)$
the order of $2$ in $\mathbb{F}_q^*$.

Since $2^{p(p+2)}\equiv1\mod q$, we have ord$(2)|p(p+2)$. Noting that {ord}$(2)\ne1$, therefore {ord}$(2)=p,\ p+2, \text{ or }N$.
Clearly, we also have {ord}$(2)|(q-1)$. Thus $q\geq  {p+2}$.
$\hfill\blacksquare$\end{proof}

Now we can introduce the  second main result.
\begin{theorem}\label{th_gen2}
Let $ s $ be any known ideal $2$-level autocorrelation sequence with period $N$. Then its $2$-adic complexity
is $N$.
\end{theorem}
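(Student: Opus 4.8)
The plan is to apply Theorem \ref{th_gen} directly: it suffices to compute $\det(A)$ for such a sequence $s$ and to verify that $\gcd(2^N-1,\det(A))=1$, since $\gcd(1-2^N,\det(A))=\gcd(2^N-1,\det(A))$. By the remark following Lemma \ref{le2} I may assume $D_s$ is an $(N,\frac{N+1}{2},\frac{N+1}{4})$ cyclic difference set, and by Lemma \ref{le1} the period $N$ lies in one of the three families $N=2^n-1$, $N=p$ with $p\equiv3\bmod4$ prime, or $N=p(p+2)$ with $p$ and $p+2$ both prime. In every case $N\equiv3\bmod4$, so $\frac{N+1}{4}$ is a positive integer. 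I would use Lemma \ref{le_gen3} to write $\det(A)=\prod_{j=0}^{N-1}P_s(w_N^j)$ and then evaluate the factors.

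For $j=0$ one has $P_s(1)=|D_s|=\frac{N+1}{2}$. For $1\le j\le N-1$ the map $i\mapsto w_N^{ji}$ is a nontrivial character of $\mathbb{Z}/N\mathbb{Z}$, and the defining group-ring identity of an $(N,k,\lambda)$ cyclic difference set yields $|P_s(w_N^j)|^2=k-\lambda=\frac{N+1}{4}$. Since the coefficients $s_i$ are integers, $\overline{P_s(w_N^j)}=P_s(w_N^{N-j})$; because $N$ is odd the index $j=N-j$ never occurs for $j\ne0$, so the nonzero frequencies split into exactly $\frac{N-1}{2}$ conjugate pairs, each contributing $\frac{N+1}{4}$. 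This gives the closed form
\[
\det(A)=\frac{N+1}{2}\left(\frac{N+1}{4}\right)^{(N-1)/2}.
\]
In particular every odd prime dividing $\det(A)$ already divides $N+1$.

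It then remains to show that $N+1$ and $2^N-1$ share no common odd prime, which I would settle case by case using the factorizations of $N+1$. When $N=2^n-1$ we have $N+1=2^n$, so $\det(A)$ is a power of $2$ and is trivially coprime to the odd number $2^N-1$. When $N=p$ one has $N+1=p+1$, and when $N=p(p+2)$ one computes $N+1=p^2+2p+1=(p+1)^2$; in both subcases the odd primes dividing $\det(A)$ are exactly the odd primes dividing $p+1$, each of which is at most $\frac{p+1}{2}<p+2$. On the other hand, Lemma \ref{le5} guarantees that every prime factor of $2^N-1$ is at least $p+2$. Hence no odd prime divides both $\det(A)$ and $2^N-1$, so $\gcd(2^N-1,\det(A))=1$, and Theorem \ref{th_gen} gives $AC(s)=N$.

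I expect the main obstacle to be the second step, namely extracting the exact value $|P_s(w_N^j)|^2=\frac{N+1}{4}$ from the difference set structure and assembling the factors into the clean product above; once $\det(A)$ is in closed form, the coprimality reduces to a short arithmetic argument powered entirely by Lemma \ref{le5}. The point worth spelling out carefully is that the conjugate pairing is legitimate precisely because $N$ is odd and the $s_i$ are real, which is what collapses the product to the single integer displayed above.
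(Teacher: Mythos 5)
Your proof is correct, but it reaches the closed form for $\det(A)$ by a genuinely different route than the paper. The paper never touches the eigenvalue product for this family: it forms the Gram matrix $B=A^{\mathrm T}A$, observes that $b_{i,j}=|D_s\cap(D_s+(i-j))|$ equals $\frac{N+1}{2}$ on the diagonal and $\frac{N+1}{4}$ off it by the difference-set property, invokes the rank-one-perturbation determinant formula (Lemma \ref{le8}) to get $\det(B)=\left(\frac{N+1}{2}\right)^2\left(\frac{N+1}{4}\right)^{N-1}$, and extracts $|\det(A)|=\sqrt{\det(B)}$. You instead use the circulant factorization of Lemma \ref{le_gen3} together with the character-sum identity $|P_s(w_N^j)|^2=k-\lambda$ for a nontrivial character of $\mathbb{Z}/N\mathbb{Z}$, pairing $j$ with $N-j$ (legitimately, since $N$ is odd and the $s_i$ are real) to collapse the product. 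Both routes exploit exactly the same combinatorial input --- the $(N,\frac{N+1}{2},\frac{N+1}{4})$ difference-set structure --- and converge to the identical integer $\frac{N+1}{2}\left(\frac{N+1}{4}\right)^{(N-1)/2}$, after which the endgame (the $N=2^n-1$ parity case and Lemma \ref{le5} for the $N=p$ and $N=p(p+2)$ cases) is the same short arithmetic in both proofs. Your version is mildly sharper in that it produces $\det(A)$ itself, with sign, rather than only $|\det(A)|$ (the paper's square-root step forgets the sign, which is harmless for the gcd); the paper's version is mildly more elementary in that it stays entirely in integer linear algebra with no complex characters. It is also worth noting that your argument quietly corrects an impression left by the paper's own Section 3 discussion, which classifies the ideal-autocorrelation family as a case where ``the exponential sums can not be easily computed'': your proof shows the individual sums $P_s(w_N^j)$ are indeed never needed, because the difference-set identity pins down their absolute values uniformly, and that is all the product requires.
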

\begin{proof}
By Theorem \ref{th_gen}, it suffices  to prove that $\gcd(1-2^N, \det(A))=1$, where  $A=(a_{i,j})_{N\times N}$
is the matrix defined by $a_{i,j}=s_{(i-j)\mod N}$.

Let $B=A^{\text T}A=(b_{i,j})_{N\times N}$, {where $A^{\text T}$ is the transpose of the matrix $A$.} Then $$b_{i,j}=\sum_{k=0}^{N-1}s_{k-i}s_{k-j}=\sum_{k=0}^{N-1}s_ks_{k+i-j}.$$
Thus $b_{i,j}=|D_s\cap (D_s+(i-j))|$.
Noting that $D_s$ is an $(N, \frac{N+1}{2}, \frac{N+1}{4})$ cyclic difference set,
we have
\begin{equation}
b_{i,j}=\left\{
\begin{aligned}
&\frac{N+1}{2},\ \text{if } i=j;\\
&\frac{N+1}{4},\ \text{if } i \ne j.
\end{aligned}
\right.\nonumber
\end{equation}
Hence, by Lemma \ref{le8} we have $\det(B)=(\frac{N+1}{2})^2(\frac{N+1}{4})^{N-1}$. Then $|\det(A)|=\sqrt{\det(B)}=\frac{N+1}{2}(\frac{N+1}{4})^{\frac{N-1}{2}}$.

According to Lemma \ref{le1}, there are only three cases for $N$.

If $N=2^n-1$, then $|\det(A)|=2^{n-1}2^{(n-{2})\frac{N-1}{2}}$. Since  $1-2^N$ is
odd, we have $\gcd(1-2^N, \det(A))=1$.

If $N=p$, then it follows from Lemma \ref{le5} that $\gcd(2^p-1, p+1)=1$. Hence
$\gcd(2^N-1, \det(A))=1$.

If $N=p(p+2)$, then $|\det(A)|=\frac{(p+1)^2}{2}(\frac{(p+1)^2}{4})^{\frac{N-1}{2}}.$
Similarly, it follows from Lemma \ref{le5} that $\gcd(p+1, 1-2^N)=1$ and $\gcd(1-2^N, \det(A))=1$.

We are done.
$\hfill\blacksquare$\end{proof}

{Theorem \ref{th_gen2} gives a uniform proof that all the known binary
sequences with ideal $2$-level autocorrelation have the maximum $2$-adic complexities.
To the authors' best knowledge,  the $2$-adic complexities of all these sequences except $m$-sequences are firstly determined.
Another  consequence of Theorem \ref{th_gen2} is that one can say more about the relation of linear complexity and $2$-adic complexity.
As we recalled, $m$-sequences are a class of sequences with minimum linear complexity and maximum $2$-adic complexity, while some
$l$-sequences are a class of sequences with minimum $2$-adic complexity and maximum linear complexity.
Now Legendre sequences, twin-prime sequences and Hall's sextic residue sequences are  examples of the sequences whose
 linear complexity and $2$-adic complexity both  attain the maximum.
}

\subsection{Legendre sequence and Ding-Helleseth-Lam sequence}

In this subsection, we will use Theorem \ref{th_gen} to determine 2-adic complexit{ies} of
Legendre sequence and Ding-Helleseth-Lam sequence. {According to Theorem \ref{th_gen} and the analysis followed,
we need to compute $P_s(w^j)$, which is related to some exponential sums.
For Legendre sequence, it is related to quadratic Gauss sum; while
 for Ding-Helleseth-Lam sequence, it is related to quartic Gauss sum. }

\begin{theorem}
Let $s$ be a Legendre sequence with period $p\equiv1\mod4$. Then $AC(s)=p$.
\end{theorem}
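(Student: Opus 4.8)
The plan is to apply Theorem~\ref{th_gen} directly, so it suffices to show that $\gcd(1-2^p, \det(A))=1$, where $A=(a_{i,j})_{p\times p}$ is the circulant matrix with $a_{i,j}=s_{(i-j)\bmod p}$. For the Legendre sequence, the support set $D_s=D_0^{(2,p)}$ is the set of quadratic residues mod $p$, so I would use Lemma~\ref{le_gen3} to write $\det(A)=\prod_{j=0}^{p-1}P_s(w_p^j)$ and compute each factor. The factor $P_s(1)=|D_s|=\frac{p-1}{2}$ is immediate. For $1\le j\le p-1$, $P_s(w_p^j)=\sum_{i\in D_0^{(2,p)}}w_p^{ji}$ is a quadratic Gauss-sum-type expression: writing $\psi$ for the quadratic character of $\mathbb{F}_p$, one has $\sum_{i\in D_0^{(2,p)}}w_p^{ji}=\frac{1}{2}\bigl(G(\psi;j)+\sum_{x\in\mathbb{F}_p^*}w_p^{jx}\bigr)=\frac{1}{2}\bigl(\psi(j^{-1})\sqrt{p}-1\bigr)$, invoking Lemma~\ref{le_gen2}(1) together with $\sum_{x\in\mathbb{F}_p^*}w_p^{jx}=-1$.

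Since $p\equiv1\bmod4$, the quadratic character value $\psi(j^{-1})=\pm1$ is real, and $\sqrt{p}\in\mathbb{R}$, so each factor is real and takes one of the two values $\frac{-1\pm\sqrt{p}}{2}$. Over the $p-1$ nontrivial indices $j$, exactly $\frac{p-1}{2}$ of them lie in each coset, so the product of these factors is $\bigl(\frac{-1+\sqrt{p}}{2}\bigr)^{(p-1)/2}\bigl(\frac{-1-\sqrt{p}}{2}\bigr)^{(p-1)/2}=\bigl(\frac{1-p}{4}\bigr)^{(p-1)/2}$, using $\frac{(-1+\sqrt p)(-1-\sqrt p)}{4}=\frac{1-p}{4}$. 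Multiplying by the $j=0$ factor $\frac{p-1}{2}$ gives $|\det(A)|=\frac{p-1}{2}\left(\frac{p-1}{4}\right)^{(p-1)/2}$, up to sign. (As a consistency check, this matches the value one would obtain from $\det(A^{\mathrm{T}}A)$ computed via Lemma~\ref{le8} with the autocorrelation counts of the Legendre sequence.)

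It remains to verify $\gcd(1-2^p,\det(A))=1$. Because $1-2^p$ is odd, the factor $\frac{p-1}{2}$ and the powers of $\frac{p-1}{4}$ contribute only prime divisors of $p-1$ (and possibly the factor $2$, which is harmless against the odd number $1-2^p$). By Lemma~\ref{le5}(1), every prime factor $q$ of $2^p-1$ satisfies $q\ge p+2$, so no prime dividing $p-1$ can divide $2^p-1$; hence $\gcd(p-1,\,1-2^p)=1$, and therefore $\gcd(\det(A),1-2^p)=1$. The conclusion $AC(s)=p$ then follows from Theorem~\ref{th_gen}.

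The step I expect to require the most care is the Gauss-sum evaluation of $P_s(w_p^j)$ and the accompanying bookkeeping: one must correctly relate the restricted sum over quadratic residues to $G(\psi;j)$, track the factor $\frac12$ and the $-1$ coming from $\sum_{x\ne0}w_p^{jx}$, and confirm the clean real split into the two values $\frac{-1\pm\sqrt p}{2}$ that depends crucially on $p\equiv1\bmod4$ (so that $\psi(-1)=1$ and $\sqrt p$ is real rather than purely imaginary). Once the factored form of $\det(A)$ is in hand, the coprimality argument via Lemma~\ref{le5} is short and structurally identical to the corresponding step in the proof of Theorem~\ref{th_gen2}.
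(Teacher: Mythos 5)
Your proposal is correct and follows essentially the same route as the paper: apply Theorem~\ref{th_gen}, evaluate $\det(A)=\prod_j P_s(w_p^j)$ via Lemma~\ref{le_gen3} and the quadratic Gauss sum of Lemma~\ref{le_gen2}(1) to get $\pm\frac{p-1}{2}\left(\frac{p-1}{4}\right)^{(p-1)/2}$, and conclude coprimality with $1-2^p$ from Lemma~\ref{le5}(1) as in Theorem~\ref{th_gen2}. The only cosmetic difference is that you evaluate $P_s(w_p^j)$ directly through the identity $\sum_{i\in D_0^{(2,p)}}w_p^{ji}=\frac{1}{2}(G(\psi;j)-1)$ with $G(\psi;j)=\psi(j^{-1})\sqrt{p}$, whereas the paper reaches the same two values $\frac{\pm\sqrt{p}-1}{2}$ via $1+2B_0=g(2;1)=\sqrt{p}$ and $1+B_0+B_1=0$.
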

\begin{proof}
By Theorem \ref{th_gen}, it suffices  to prove that $\gcd(1-2^p, \det(A))=1$, where  $A=(a_{i,j})_{p\times p}$
is the matrix defined by $a_{i,j}=s_{(i-j)\mod p}$.

Let $w_p=e^{\frac{2\pi i}{p}}$, $B_0=\sum_{x\in D_0^{(2,p)}}w_p^x$ and $B_1=\sum_{x\in D_1^{(2,p)}}w_p^x$. According to the definition of Legendre sequence,
we have
\begin{equation}
P_s(w_p^j){=\sum_{i\in D_0^{(2,p)}}w_p^{ij}}
=\left\{\begin{array}{ll}
          \frac{p-1}{2}, & \text{ if } j=0; \\
          B_0, & \text{ if } j \in D_0^{(2,p)}; \\
          B_1, & \text{ if } j \in D_1^{(2,p)}.
        \end{array}
\right.
\nonumber\end{equation}

By Lemma \ref{le_gen2}, we have $1+2B_0=g(2;1)=\sqrt{p}$. Besides, one can easily deduce $1+B_0+B_1=0$.
Hence, $B_0=\frac{\sqrt{p}-1}{2}$ and $B_1=-\frac{\sqrt{p}+1}{2}$.
Thus, it follows from Lemma \ref{le_gen3} that
 \begin{equation}
 \begin{split}
 \det(A)&=\prod_{j=0}^{p-1}P_s(w_p^j)\\
 &=\frac{p-1}{2}\left(\frac{\sqrt{p}-1}{2}\right)^{\frac{p-1}{2}}\left(\frac{-\sqrt{p}-1}{2}\right)^{\frac{p-1}{2}}\\
 &=\frac{p-1}{2}\left(\frac{p-1}{4}\right)^{\frac{p-1}{2}}.
\end{split}\nonumber
\end{equation}
Similar argument as in Theorem \ref{th_gen2} show{s} that $\gcd(\det(A),2^p-1)=1$.
$\hfill\blacksquare$\end{proof}

Before introducing the result on the 2-adic complexity of Ding-Helleseth-Lam sequence,
we need a lemma.
\begin{lemma}\label{le_DHL}
Let $p\equiv1\mod4$ be a prime and $a$ be an odd integer such that $a^2+b^2=p$.
Then $\gcd(1\pm 2p+a^2p,2^p-1)=1$.
\end{lemma}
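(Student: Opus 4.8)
### Proof Plan for Lemma \ref{le_DHL}

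The goal is to show that $\gcd(1 \pm 2p + a^2 p, 2^p - 1) = 1$, where $a$ is odd and $a^2 + b^2 = p$. The natural strategy is to rule out any common prime factor $q$. Suppose, for contradiction, that a prime $q$ divides both $1 \pm 2p + a^2 p$ and $2^p - 1$. By part (1) of Lemma \ref{le5}, any prime factor $q$ of $2^p - 1$ satisfies $q \ge p + 2$; in particular $q$ is odd and $q > p$. The plan is to derive enough arithmetic constraints on the integer $1 \pm 2p + a^2 p$ to conclude it cannot be divisible by such a large prime unless it is zero or too small in absolute value, which will force a contradiction.

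The key step is to bound the quantity $N_\pm := 1 \pm 2p + a^2 p$ and compare with the size forced by divisibility. First I would record the elementary bounds coming from $a^2 + b^2 = p$: since $a$ is odd and $b$ is a genuine summand, we have $1 \le a^2 \le p - 1$ (indeed $a^2 \le p$, and $a^2 = p$ is impossible as $p$ is prime, while $a^2 \le p-1$ since $b \ne 0$). Substituting these bounds gives an explicit interval for $N_\pm$. For the $+$ sign, $N_+ = 1 + 2p + a^2 p$ lies between roughly $1 + 2p + p = 3p + 1$ and $1 + 2p + (p-1)p = p^2 + p + 1$; for the $-$ sign, $N_- = 1 - 2p + a^2 p = 1 + (a^2 - 2)p$, whose sign and magnitude I would track carefully since $a^2 - 2$ can be negative only when $a^2 = 1$. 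The aim is to show $N_\pm$ is a positive integer that is strictly smaller than $q^2$ (using $q \ge p+2$, so $q^2 \ge (p+2)^2 > p^2 + p + 1$ for all $p \ge 3$), so that divisibility by $q$ would force $N_\pm = q \cdot m$ with $m$ small, and then a congruence computation modulo $q$ would be needed to eliminate the remaining candidates.

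More precisely, the cleaner route is likely a direct congruence argument rather than crude size bounds. Since $q \mid 2^p - 1$ we have $2^p \equiv 1 \pmod q$, and since $q$ is prime with $\mathrm{ord}_q(2) = p$ (as $p$ is prime and $\mathrm{ord}_q(2) \mid p$ with $\mathrm{ord}_q(2) \ne 1$), we get $p \mid q - 1$, so $q \equiv 1 \pmod p$. The plan is to reduce $N_\pm = 1 \pm 2p + a^2 p \equiv 1 \pmod p$, so $q \mid N_\pm$ together with $q \equiv 1 \pmod p$ is consistent modulo $p$ and gives no immediate contradiction there; hence the decisive constraint must come from the size comparison $N_\pm < q^2 \le (2^p-1)$-type bound combined with $q \ge p+2$. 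I would argue that if $q \mid N_\pm$ and $q \ge p+2$, then either $N_\pm = 0$ or $|N_\pm| \ge q \ge p+2$; the former is impossible for the stated ranges (checking $N_- = 1 + (a^2-2)p = 0$ has no solution since $p \nmid 1$), and the latter, combined with the upper bound $N_\pm < (p+2)^2 \le q^2$, pins $N_\pm$ into the range $[q, q^2)$, forcing $N_\pm/q$ to be an integer cofactor that must itself be coprime to the small prime structure.

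The step I expect to be the main obstacle is handling the upper bound sharply enough: showing $N_+ = 1 + 2p + a^2p \le p^2 + p + 1 < (p+2)^2$ fails for the $+$ case, since $p^2 + p + 1$ is not smaller than $(p+2)^2 = p^2 + 4p + 4$ — wait, in fact $p^2+p+1 < p^2+4p+4$ always holds, so $N_+ < q^2$ does follow. The genuine difficulty is then excluding the middle case where $N_\pm = q \cdot m$ with $1 < m < q$: here I would need to use the exact value of $a$ determined (up to sign and the congruence $a \equiv -\left(\frac{2}{p}\right) \pmod 4$ from Lemma \ref{le_gen2}) and a finer modular analysis, possibly reducing $N_\pm$ modulo $q$ using $q \equiv 1 \pmod p$ to write $a^2 p \equiv a^2 \cdot p$ and exploiting that $p = a^2 + b^2$ constrains $a^2 \bmod q$. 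I anticipate that the cleanest finish replaces the size argument entirely: factor $N_\pm$ as a product related to $(1 + a\sqrt{p})$-type quantities mirroring the Gauss sum expression $p + a\sqrt p$ appearing in Lemma \ref{le_gen2}, so that $N_\pm = (1\pm a')(1 \pm a'') \cdots$ and each factor is visibly too small to be divisible by a prime $q \ge p+2$. Establishing that factorization is where the real work lies.
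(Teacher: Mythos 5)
You have assembled the right ingredients but stopped short of the argument that actually closes the lemma, so as written this is a plan with the decisive step missing. Concretely: supposing a prime $q$ divides both $N_\pm=1\pm2p+a^2p$ and $2^p-1$, you correctly obtain $q\equiv1\pmod{p}$ and $q\ge p+2$ (in fact $q=kp+1$ with $k$ even, since $q$ and $p$ are both odd, so $q\ge 2p+1$), and you bound $|N_\pm|\le p^2+p+1$ from $a^2\le p-1$. But you then (i) dismiss the congruence $N_\pm\equiv1\pmod{p}$ as giving ``no immediate contradiction,'' when it is exactly the decisive constraint, and (ii) never use the hypothesis that $a$ is odd. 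Write $|N_\pm|=qm$. Your own size bound yields more than $m<q$: since $pq\ge p(p+2)>p^2+p+1\ge|N_\pm|$, in fact $1\le m\le p-1$. (The one negative case, $N_-=1-p$ when $a^2=1$, dies at once because $|N_-|=p-1<q$.) Now $N_\pm\equiv1\pmod{p}$ and $q\equiv1\pmod{p}$ force $m\equiv1\pmod{p}$, which together with $1\le m\le p-1$ pins $m=1$, i.e.\ $N_\pm=q$ is odd. But $a$ odd makes $a^2p$ odd, hence $N_\pm=1\pm2p+a^2p$ is even --- contradiction. This parity observation is the entire point of the oddness hypothesis, and it is precisely how the paper finishes: there one writes $1+2p+a^2p=ur$ with $r$ a common odd prime factor, deduces $u\equiv1\pmod{p}$ and $u<p+1$ (using $r\ge 2p+1$), hence $u=1$, contradicting that $u$ must be even.

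Your proposed alternative ending --- factoring $N_\pm$ into quantities of the shape $(1\pm a')(1\pm a'')$ mirroring the Gauss sum $p+a\sqrt{p}$ --- is not a viable substitute: $1\pm2p+a^2p$ admits no such factorization over $\mathbb{Z}$ in general (for $p=13$, $a=3$ one gets $1-2p+a^2p=92=4\cdot23$, whose factors have no such shape), and you yourself flag that establishing it is ``where the real work lies.'' Likewise, the crude dichotomy $N_\pm=0$ or $|N_\pm|\ge q$ combined with $N_\pm<q^2$ cannot by itself exclude the cofactors $1<m<q$. Since the congruence-plus-parity finish above uses only facts you had already derived, the gap is one of execution rather than missing machinery; but without it the proposal does not prove the lemma.
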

\begin{proof} We only prove that $\gcd(1+2p+a^2p,2^p-1)=1$ and the other case can be proved similarly.

Assume on the contrary that $\gcd(1+2p+a^2p,2^p-1)=d>1$. Let $r>0$ be an odd prime factor of $d$. Then one
can deduce that $r-1\equiv 0\mod p$ as in the proof of Lemma \ref{le5}. Thus $r=kp+1$, where
$k\geq 2$ is an even integer since both $r$ and $p$ are odd.
% Since $r$ is a divisor of $1+2p+a^2p$,we
Let $1+2p+a^2p = ur$. Then $u$ is an even integer since $1+2p+a^2p$ is even. Clearly, $u\equiv 1\mod p$.
On the other hand, $u=\frac{1+2p+a^2p}{r}<\frac{1+2p+p^2}{r}<p+1$. Thus we get $u=1$
{which contradicts that $u$ is an even integer.}
%, $r=1+2p+a^2p$
%and $k=2+a^2$, which is a paradox since $k$ is even and $a$ is odd.
Hence $\gcd(1+2p+a^2p,2^p-1)=1$.
$\hfill\blacksquare$\end{proof}

\begin{theorem}
Let $s$ be a Ding-Helleseth-Lam sequence with period $p\equiv1\mod4$.
{Then $AC(s)=p$.}
\end{theorem}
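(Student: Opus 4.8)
The plan is to mirror the Legendre-sequence argument. By Theorem \ref{th_gen} it suffices to prove $\gcd(1-2^p,\det(A))=1$, and by Lemma \ref{le_gen3} we have $\det(A)=\prod_{j=0}^{p-1}P_s(w_p^j)$ with $w_p=e^{2\pi i/p}$. First I would introduce the four quartic Gaussian periods $\eta_k=\sum_{x\in D_k^{(4,p)}}w_p^x$ for $0\le k\le 3$. Since $D_s=D_0^{(4,p)}\cup D_1^{(4,p)}$, the $j=0$ term is $P_s(1)=|D_s|=\frac{p-1}{2}$. For $j\neq0$, I would use that multiplying the index set $D_0^{(4,p)}\cup D_1^{(4,p)}$ by an element of $D_l^{(4,p)}$ merely shifts the cyclotomic classes, so that $P_s(w_p^j)=\eta_l+\eta_{l+1}$ (indices taken mod $4$) whenever $j\in D_l^{(4,p)}$. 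As each class $D_l^{(4,p)}$ has exactly $\frac{p-1}{4}$ elements, this gives
\[
\det(A)=\frac{p-1}{2}\Big[(\eta_0+\eta_1)(\eta_1+\eta_2)(\eta_2+\eta_3)(\eta_3+\eta_0)\Big]^{\frac{p-1}{4}}.
\]

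Next I would evaluate the product $\Pi=(\eta_0+\eta_1)(\eta_1+\eta_2)(\eta_2+\eta_3)(\eta_3+\eta_0)$ in closed form. Writing $G_j=G(\psi^j;1)$ for a multiplicative character $\psi$ of order $4$, the periods are recovered from the Gauss sums by Fourier inversion of the relations underlying Lemma \ref{le_gen1}, with $G_0=-1$ and $G_2=\sqrt{p}$ supplied by Lemma \ref{le_gen2}(1). Substituting these expressions and using the standard identity $G_1G_3=\psi(-1)p$ together with $(G_1+G_3)^2=2\left(\frac{2}{p}\right)(p+a\sqrt{p})$ from Lemma \ref{le_gen2}(2), I expect $\Pi$ to collapse to $\frac{1}{16}\big(1-2\left(\frac{2}{p}\right)p+a^2p\big)$. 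The crucial and most delicate point is that the $\sqrt{p}$ contributions must cancel: this occurs precisely because $\psi(-1)=\left(\frac{2}{p}\right)$ (both equal $1$ when $p\equiv1\bmod8$ and $-1$ when $p\equiv5\bmod8$), which is exactly what forces the irrational parts to vanish and leaves the rational value above. This Gauss-sum bookkeeping is the technical heart of the proof and the step where sign errors are most likely to creep in, so I would track the order of $\psi(\alpha)$ and the congruence $a\equiv-\left(\frac{2}{p}\right)\bmod4$ carefully.

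Finally I would assemble the gcd. Since $16^{(p-1)/4}=2^{p-1}$ is a power of two and $2^p-1$ is odd, the denominator does not affect the gcd with $2^p-1$, so it suffices to handle the odd factors $\frac{p-1}{2}$ and $1-2\left(\frac{2}{p}\right)p+a^2p$. For the first, the argument of Lemma \ref{le5} shows every prime divisor of $2^p-1$ is at least $p+2$, hence coprime to $p-1$. For the second, noting that $a$ is odd (since $a\equiv-\left(\frac{2}{p}\right)\bmod4$), Lemma \ref{le_DHL} gives $\gcd\!\big(1\mp2p+a^2p,\,2^p-1\big)=1$ in both sign cases $\left(\frac{2}{p}\right)=\pm1$. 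Therefore $\gcd(\det(A),1-2^p)=1$, and Theorem \ref{th_gen} yields $AC(s)=p$.
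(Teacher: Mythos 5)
Your proposal is correct and follows essentially the same route as the paper's proof: reduce to $\gcd(1-2^p,\det(A))=1$ via Theorem \ref{th_gen}, factor $\det(A)=\frac{p-1}{2}\bigl[(\eta_0+\eta_1)(\eta_1+\eta_2)(\eta_2+\eta_3)(\eta_3+\eta_0)\bigr]^{\frac{p-1}{4}}$ through Lemma \ref{le_gen3}, evaluate the period product by quartic Gauss sums, and finish with Lemma \ref{le5} and Lemma \ref{le_DHL}. The only difference is notational: the paper splits into the cases $p\equiv1,5\bmod 8$ and works with $R=\mathrm{Re}\,G(\lambda;1)$, $I=\mathrm{Im}\,G(\lambda;1)$, whereas you treat both cases uniformly via $\psi(-1)=\left(\frac{2}{p}\right)$ and $G_1G_3=\psi(-1)p$, arriving at the same value $16\Pi=1-2\left(\frac{2}{p}\right)p+a^2p$, which matches the paper's $1\mp 2p+a^2p$ in the two cases.
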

\begin{proof}
By Theorem \ref{th_gen}, it suffices  to prove that $\gcd(1-2^p, \det(A))=1$, where  $A=(a_{i,j})_{p\times p}$
is the matrix defined by $a_{i,j}=s_{(i-j)\mod p}$.

Let $\alpha$ be a primitive element of $\mathbb{F}_p$ and $w_p=e^{\frac{2\pi i}{p}}$.
Let $\lambda$ be a multiplicative character of $\mathbb{F}_p$ defined by $\lambda(\alpha)=i$.
Then the order of $\lambda$ is $4$.
{For $0\le i\le 3$, let $B_i=\sum_{x\in D_i^{(4,p)}}w_p^x$.} According to the definition of Ding-Helleseth-Lam sequence, we deduce
\begin{equation}\label{eq4}
P_s(w_p^j){=\sum_{i\in D_0^{(4,p)}\cup D_1^{(4,p)}}w_p^{ij}}
=\left\{\begin{array}{ll}
          \frac{p-1}{2}, & \text{ if } j=0; \\
          B_0+B_1, & \text{ if } j \in D_0^{(4,p)}; \\
          B_1+B_2, & \text{ if } j \in D_1^{(4,p)}; \\
          B_2+B_3, & \text{ if } j \in D_2^{(4,p)}; \\
          B_{3}+B_{0}, & \text{ if } j \in D_3^{(4,p)}.
        \end{array}
\right.
\end{equation}
Hence
\begin{equation}\label{DetLen}
\det(A) = \left(\frac{p-1}{2}\right)[(B_0+B_1)(B_1+B_2)(B_2+B_3)(B_3+B_0)]^{\frac{p-1}{4}}.
\end{equation}
{It follows from Lemma \ref{le5} that $\gcd(\frac{p-1}{2}, 1-2^p)=1$. }

By Lemma \ref{le_gen1}, we have
\begin{equation}
\left\{
\begin{aligned}
&1+4B_0=g(4;1)=G(\lambda;1)+G(\lambda^2;1)+G(\lambda^3;1);\\
&1+4B_1=g(4;\alpha)=\lambda(\alpha^{-1})G(\lambda;1)+\lambda^2(\alpha^{-1})G(\lambda^2;1)+\lambda^3(\alpha^{-1})G(\lambda^3;1);\\
&1+4B_2=g(4;\alpha^2)=\lambda(\alpha^{-2})G(\lambda;1)+\lambda^2(\alpha^{-2})G(\lambda^2;1)+\lambda^3(\alpha^{-2})G(\lambda^3;1);\\
&1+4B_3=g(4;\alpha^3)=\lambda(\alpha^{-3})G(\lambda;1)+\lambda^2(\alpha^{-3})G(\lambda^2;1)+\lambda^3(\alpha^{-3})G(\lambda^3;1).
\end{aligned}
\right.\nonumber
\end{equation}
One can easily verify that $G(\lambda^3;1)=\lambda(-1)\overline{G(\lambda;1)}$.
Noting that $\lambda(\alpha)=i$, the above equation can be reduced as
\begin{equation}\label{eq5}
\left\{
\begin{aligned}
&1+4B_0=g(4;1)=G(\lambda;1)+G(\lambda^2;1)+\lambda(-1)\overline{G(\lambda;1)};\\
&1+4B_1=g(4;\alpha)=-iG(\lambda;1)-G(\lambda^2;1)+i\lambda(-1)\overline{G(\lambda;1)};\\
&1+4B_2=g(4;\alpha^2)=-G(\lambda;1)+G(\lambda^2;1)-\lambda(-1)\overline{G(\lambda;1)};\\
&1+4B_3=g(4;\alpha^3)=iG(\lambda;1)-G(\lambda^2;1)-i\lambda(-1)\overline{G(\lambda;1)}.
\end{aligned}
\right.
\end{equation}

Let $R=\text{Re}(G(\lambda;1))$ and $I=\rm{Im}(G(\lambda;1))$.

If $p\equiv1\mod8$, then $\lambda(-1)={\lambda(\alpha^{\frac{p-1}{2}})=i^{\frac{p-1}{2}}}=1$. From Eq. (\ref{eq5}), {we get}
\begin{equation}
\left\{
\begin{aligned}
&2(B_0+B_1)=R+I-1;\\
&2(B_1+B_2)=I-R-1;\\
&2(B_2+B_3)=-R-I-1;\\
&2(B_3+B_0)=-I+R-1.
\end{aligned}
\right.
\end{equation}
Hence
\begin{equation}
\begin{split}
&16(B_0+B_1)(B_1+B_2)(B_2+B_3)(B_3+B_{0})\\
=&\left(1-(R+I)^2\right)\left(1-(R-I)^2\right)\\
=&1-2R^2-2I^2+(R^2-I^2)^2.
\end{split}
\end{equation}

It follows from Lemma \ref{le_gen2} that $R^2+I^2=p$ and $4R^2=2(p+a\sqrt{p})$. Hence we deduce $R^2=\frac{1}{2}(p+a\sqrt{p})$ and $I^2=\frac{1}{2}(p-a\sqrt{p})$.
Thus $$16(B_0+B_1)(B_1+B_2)(B_2+B_3)(B_3+B_0)=1-2p+a^2p.$$
{ It then follows from Eq. (\ref{DetLen}), $\gcd(\frac{p-1}{2}, 1-2^p)=1$ and Lemma \ref{le_DHL} that $\gcd(\det(A), 1-2^p)=1$. }

Similarly, if $p\equiv5\mod8$, then one can deduce
 $$16(B_0+B_1)(B_1+B_2)(B_2+B_3)(B_3+B_0)=1+2p+a^2p.$$
Hence we also have $\gcd(\det(A), 1-2^p)=1$.

{The proof is finished. }
$\hfill\blacksquare$\end{proof}

{
In this section, by using our new method, the 2-adic complexities of many
binary sequences with optimal autocorrelation are determined. We believe that it can be used to determine the 2-adic complexities of more binary
sequences. The reader is cordially invited to join this adventure.

On the other hand,  we must mention that this method has its own drawback.
It can not work for those binary sequences for which one has $\det(A)=0$,
 where $A$ is the circulant matrix defined by the sequence.
For example, let $s$ be a Ding-Helleseth-Martinsen sequence \cite{Cai:2009} with period $N=2q$, where
$q\equiv5\mod8$ is a prime. According to the definition of $s$, we have
$P_s(w_N^q)=P_s(-1)=0$, where $w_N=e^{\frac{2\pi i}{N}}$.
Then one can deduce  that $\det(A)=0$ from Lemma \ref{le_gen3}. Similarly, when $s$ is a Sidelnikov-Lempel-Cohn-Eastman sequence  \cite{Cai:2009} with period $N\equiv0\mod4$,
one can also prove that $\det(A)=0$.
Other methods may be needed to compute the $2$-adic complexities of these sequences.
%Here I suggest to mention the work on other sequences with optimal autocorrelation in \cite{Cai:2009}.
%It can not work for  D-H-M sequence since $\det (A)=0$. If $\det (A)$ can not be computed for some sequences,
%then show the difficulty.
}

\section{Observe binary sequences from different finite fields}
For a binary sequence $s$, since its elements consist of $0$ and $1$,
it can {also} be viewed as a sequence over { another
finite field. Let us denote by $LC_q(s)$ the linear complexity of
$s$ when we regard it as a sequence over finite field  $\mathbb{F}_q$.
Clearly, $LC_q(s)$  may be different when $q$ differs.
For example, let $s=11000,11000,\cdots$ be a binary sequence with period 5. Then one can verify that
$LC_2(s)=4$.
However, if we regard $s$ as a sequence over $\mathbb{F}_3$, then $LC_3(s)=5\neq 4$. }
It is natural to ask what is the relationship of the different linear complexity of the same binary sequence.
In this section, we will investigate this problem and will present some interesting results. To our knowledge, there are only
a few results about this problem; see \cite{Ding:2013}.

{Firstly, we have the following observation.
\begin{proposition}\label{Pr_qp}
Let $s$ be a binary periodic sequence  and $\mathbb{F}_q$ be a finite field with character $p$.
Then $LC_q(s)=LC_p(s)$.
\end{proposition}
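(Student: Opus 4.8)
The plan is to express $LC_q(s)$ through a polynomial gcd and then show that this gcd does not change under field extension within a fixed characteristic. First I would invoke Lemma \ref{le7}: since $s$ has period $N$, its generating function over $\mathbb{F}_q$ is $\sum_{i=0}^{\infty}s_ix^i = P_s(x)/(1-x^N)$, and reducing this fraction to lowest terms shows
$$LC_q(s)=N-\deg_{\mathbb{F}_q}\gcd\!\left(1-x^N,\,P_s(x)\right),$$
where the gcd denotes the monic greatest common divisor computed in $\mathbb{F}_q[x]$. The identical identity holds verbatim over the prime field $\mathbb{F}_p$, so it suffices to compare the two gcd degrees.

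Next I would observe that both $1-x^N$ and $P_s(x)=\sum_{i=0}^{N-1}s_ix^i$ already have all their coefficients in the prime subfield $\mathbb{F}_p\subseteq\mathbb{F}_q$, because $s$ is binary and every $s_i\in\{0,1\}\subseteq\mathbb{F}_p$. Thus both polynomials can be regarded as elements of $\mathbb{F}_p[x]$ that are simply re-read inside the larger ring $\mathbb{F}_q[x]$.

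The crux is then the claim that for $f,g\in\mathbb{F}_p[x]$ the monic gcd is the same whether computed in $\mathbb{F}_p[x]$ or in $\mathbb{F}_q[x]$. I would establish this by running the Euclidean algorithm: the sequence of polynomial divisions with remainder that computes $\gcd(f,g)$ uses only field operations on the coefficients, and since $f,g\in\mathbb{F}_p[x]$ every intermediate remainder already lies in $\mathbb{F}_p[x]$; moreover, division with remainder of polynomials over a field is unique, so the very same chain of remainders is produced when $f,g$ are viewed inside $\mathbb{F}_q[x]$. Hence the terminal monic remainder, namely the gcd, is one and the same element of $\mathbb{F}_p[x]$, and in particular has the same degree over both fields. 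This field-invariance of the gcd is the one real obstacle; once it is pinned down via uniqueness of polynomial division, everything else is immediate.

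Combining these steps gives $\deg_{\mathbb{F}_q}\gcd(1-x^N,P_s)=\deg_{\mathbb{F}_p}\gcd(1-x^N,P_s)$, and therefore $LC_q(s)=N-\deg\gcd=LC_p(s)$. As a sanity check consistent with the proposition, the example $s=11000,11000,\cdots$ from the text has $P_s(x)=1+x$, giving $LC_2(s)=4$ because $x+1=x-1$ divides $x^5-1$ in characteristic $2$, while $LC_3(s)=5$ since $x+1\nmid x^5-1$ in characteristic $3$; the two values differ precisely because the characteristics differ, exactly as the statement predicts.
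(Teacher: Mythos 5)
Your proof is correct and takes essentially the same route as the paper: both reduce $LC_q(s)$ via Lemma \ref{le7} to $N-\deg\gcd\left(P_s(x),1-x^N\right)$ and then use the fact that this gcd is unchanged in passing from $\mathbb{F}_p[x]$ to $\mathbb{F}_q[x]$. The only difference is that you explicitly justify the field-invariance of the gcd through the Euclidean algorithm, a step the paper merely asserts.
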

\begin{proof}
Denote by $N$ the period of $s$. Then $\sum_{i=0}^{\infty}s_ix^i=\frac{P_s(x)}{1-x^N}$,
where $P_s(x)$ is the sequence polynomial of $s$. Since the greatest common divisor of
$P_s(x)$ and $1-x^N$ over $\mathbb{F}_q$ is equal to that of these two polynomials over $\mathbb{F}_p$,
the  result then follows  from Lemma \ref{le7}.$\hfill\blacksquare$
\end{proof}
 }

{Thanks to Proposition \ref{Pr_qp}, we will focus on the odd prime fields in the following.  }
Let $s$ be a binary sequence with period $N$. Now, view $P_s(x)$ and $1-x^N$ as polynomials in $\mathbb{Z}[x]$.
Let $g(x)=\gcd(P_s(x), 1-x^N)$ be a monic polynomial. Then $g(x)\in\mathbb{Z}[x]$.
Clearly, there exist polynomials $u(x),\ v(x)\in\mathbb{Z}[x]$ and a nonzero integer $a$ such that
 \begin{equation}\label{eq2}
u(x)P_s(x)+v(x)(1-x^N)=ag(x).
\end{equation}
{Note that $a\neq 1$ may hold since we are working not on the fields but on the rings. For example, let $P_s(x)=1+x$ and $N=5$
as in the before example. It is clear that $\gcd(P_s(x), 1-x^N)=1$ in $\mathbb{Z}[x]$.
Substituting  $P_s(x)=1+x$, $N=5$ and $g(x)=1$ into \eqref{eq2}, one gets
$u(x)(1+x)+v(x)(1-x^5)=a$, which will force $a$ to be even since both $1+x$ and $1-x^5$ are even
 if $x$ is an odd integer. Hence $a\neq 1$.
}

\begin{theorem}\label{th2}
{Let  $p$ be a prime.} The other notations are the same as defined
 in the above paragraph. {Then $LC_p(s)\le N- \deg g(x)$.}
  If $p\not|a$, then the equality holds.
\end{theorem}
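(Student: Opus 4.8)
The plan is to work with the generating function identity from Lemma~\ref{le7}, namely $\sum_{i=0}^{\infty}s_ix^i=\frac{P_s(x)}{1-x^N}$, and to reduce the integer identity \eqref{eq2} modulo $p$. Writing $1-x^N=g(x)h(x)$ in $\mathbb{Z}[x]$ with $h(x)$ monic of degree $N-\deg g(x)$, I would first argue that over $\mathbb{F}_p$ the fraction $\frac{P_s(x)}{1-x^N}$ can be reduced by cancelling the common factor $g(x)$, giving $\frac{P_s(x)/g(x)}{h(x)}$ after reduction mod~$p$ (here $P_s(x)/g(x)$ makes sense because $g(x)\mid P_s(x)$ in $\mathbb{Z}[x]$). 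By Lemma~\ref{le7}, the linear complexity $LC_p(s)$ equals the degree of the denominator of the fully reduced fraction, so $LC_p(s)\le \deg h(x)=N-\deg g(x)$. This gives the inequality unconditionally, since further cancellation over $\mathbb{F}_p$ can only lower the degree.

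For the equality under the hypothesis $p\nmid a$, the key step is to show that when $p\nmid a$ no additional common factor appears over $\mathbb{F}_p$ between the reduced numerator $\overline{P_s(x)/g(x)}$ and the reduced denominator $\overline{h(x)}$. Here I would reduce \eqref{eq2} modulo~$p$: since $p\nmid a$, the right-hand side $\overline{a}\,\overline{g(x)}$ is a nonzero multiple of $\overline{g(x)}$, so over $\mathbb{F}_p$ we have $\overline{u}\,\overline{P_s}+\overline{v}\,(1-x^N)=\overline{a}\,\overline{g}$ with $\overline{a}\in\mathbb{F}_p^*$. This Bézout-type relation forces $\overline{g(x)}=\gcd_{\mathbb{F}_p}(\overline{P_s(x)},\,\overline{1-x^N})$ up to a unit: indeed $\overline{g}$ divides both $\overline{P_s}$ and $\overline{1-x^N}$ (because $g$ does so over $\mathbb{Z}$), and conversely any common divisor over $\mathbb{F}_p$ must divide the right-hand side $\overline{a}\,\overline{g}$ and hence divide $\overline{g}$. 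Therefore the reduced denominator is exactly $\overline{h(x)}$ of degree $N-\deg g(x)$, and by Lemma~\ref{le7} the equality $LC_p(s)=N-\deg g(x)$ holds.

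The main obstacle I anticipate is making the divisibility and degree bookkeeping fully rigorous over the ring $\mathbb{Z}[x]$ when passing to $\mathbb{F}_p[x]$. Specifically, I must ensure that the factorization $1-x^N=g(x)h(x)$ reduces mod~$p$ without degree collapse --- this is safe because $g$ and $h$ are monic, so their leading coefficients survive reduction and $\deg\overline{h}=\deg h$. I also need the quotient $P_s(x)/g(x)$ to remain a genuine polynomial after reduction, which again follows from $g$ being monic so that polynomial division in $\mathbb{Z}[x]$ commutes with reduction mod~$p$. The subtle point is the converse direction in the gcd argument: I must verify that $\gcd_{\mathbb{F}_p}(\overline{P_s},\overline{1-x^N})$ cannot properly contain $\overline{g}$, which is precisely where the Bézout identity \eqref{eq2} with $p\nmid a$ is essential --- without that hypothesis an extra factor dividing $a$ could be absorbed and the degree could drop, matching the counterexample with $P_s(x)=1+x$, $N=5$ where $a$ is even and $LC_2(s)=4<5=N-\deg g$.
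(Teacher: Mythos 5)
Your proposal is correct and follows essentially the same route as the paper: both arguments use $g(x)\mid\gcd_{\mathbb{F}_p}(\overline{P_s},\overline{1-x^N})$ for the upper bound via Lemma~\ref{le7}, and reduce the B\'ezout identity \eqref{eq2} modulo $p$ with $\overline{a}\in\mathbb{F}_p^*$ to force the mod-$p$ gcd to equal $\overline{g}$ for the equality. Your extra bookkeeping (monicity of $g$ ensuring degree preservation and that division commutes with reduction mod $p$) is a sound, slightly more explicit rendering of what the paper leaves implicit.
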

\begin{proof}
{Let us} view $P_s(x)$ and $1-x^N$ as polynomials in $\mathbb{F}_p[x]$ { and denote by } $d(x)$  their greatest common
divisor in $\mathbb{F}_p[x]$.
If we also view $g(x)$ as a polynomial over $\mathbb{F}_p$, then $g(x)|d(x)$. Hence, by Lemma \ref{le7}, we deduce that
$s$ can be generated by the LFSR over $\mathbb{F}_p$ with connection polynomial $(1-x^N)/g(x)$. Therefore,
$LC_p(s)\le N-\deg g(x)$.

If $p\not|a$, then $a\ne0$ in $\mathbb{F}_p$. By Equation (\ref{eq2}), we have $d(x)|g(x)$, which means $d(x)=g(x)$. Thus, it follows from Lemma \ref{le7} that $LC_p(s)=N-\deg g(x)$.
$\hfill\blacksquare$\end{proof}

We should remind  the reader that the inequality in the above theorem holds sometimes. For example,
let $s=(11010)$ be a sequence of period $5$. Then we have  $g(x)=\gcd(P_s(x), 1-x^5)=1$ in $\mathbb{Z}[x]$
while $d(x)=\gcd(P_s(x),1-x^5)=x-1$ in $\mathbb{F}_3[x]$. Hence $LC_3(s)=4<5=N-\deg g(x)$.

%Combining Theorem \ref{th2} and Theorem \ref{th_gen2}, one can get the following

\begin{corollary}
Let $s$ be a binary ideal $2$-level autocorrelation sequence with period $N$, and let {
$p$ be an odd prime.

(1) If $|D_s|=\frac{N+1}{2}$ and $p\not|(N+1)$, then $LC_p(s)=N$;

(2) If $|D_s|=\frac{N-1}{2}$, $p\not|(N+1)$ and $p|(N-1)$, then $LC_p(s)=N-1$;

(3) If $|D_s|=\frac{N-1}{2}$ and $p\not|(N^2-1)$, then $LC_p(s)=N$.

%(1) If $|D_s|=\frac{N+1}{2}$ and $N+1\nequiv 0\mod p$, then $LC_p(s)=N$;
%
%(2) If $|D_s|=\frac{N-1}{2}$ and $N^2-1\nequiv 0\mod p$, then $LC_p(s)=N$.

}
\end{corollary}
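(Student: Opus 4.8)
The plan is to apply Theorem \ref{th2} three times, once for each case, after establishing the required gcd-type information about $a$ and $g(x)$ for an ideal $2$-level autocorrelation sequence. The central quantity is the integer $a$ appearing in the B\'ezout-type identity \eqref{eq2}, and the polynomial $g(x)=\gcd(P_s(x),1-x^N)$ over $\mathbb{Z}[x]$. The key observation I would exploit is that Theorem \ref{th2} reduces everything to (i) computing $\deg g(x)$ and (ii) checking the divisibility condition $p\nmid a$. Since the corollary's hypotheses are all stated in terms of $p$ dividing or not dividing $N\pm1$ (equivalently the determinant-like quantity $\det(A)$ computed in Theorem \ref{th_gen2}), my task is to connect the integer $a$ of \eqref{eq2} to that determinant and to show $\deg g(x)$ is either $0$ or $1$ in the relevant cases.

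First I would pin down $\deg g(x)$. Recall $P_s(1)=|D_s|$. The common roots of $P_s(x)$ and $1-x^N$ over $\mathbb{C}$ are exactly the $N$-th roots of unity $\zeta$ with $P_s(\zeta)=0$. By Lemma \ref{le_gen3} the product $\prod_{j=0}^{N-1}P_s(w_N^j)=\det(A)$, and from the proof of Theorem \ref{th_gen2} we have $|\det(A)|=\frac{N+1}{2}\left(\frac{N+1}{4}\right)^{(N-1)/2}\neq 0$, so when $|D_s|=\frac{N+1}{2}$ no primitive $N$-th root of unity is a root of $P_s$; since also $P_s(1)=\frac{N+1}{2}\neq 0$, we get $\gcd(P_s,1-x^N)=1$ over $\mathbb{Q}$, hence $g(x)=1$ and $\deg g=0$. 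In the complementary case $|D_s|=\frac{N-1}{2}$ we have $P_s(1)=0$, so $(x-1)\mid g(x)$; the nonvanishing of $\det(A)$ at the primitive roots (same determinant computation applied to the $(N,\frac{N-1}{2},\frac{N-3}{4})$ difference set) shows no other root is shared, whence $g(x)=x-1$ and $\deg g=1$. This handles the $\deg g$ column of all three cases.

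Next I would determine $a$, which is the main obstacle since $a$ is defined only implicitly through \eqref{eq2}. The clean route is to substitute $x=1$: when $g(x)=1$, equation \eqref{eq2} gives $u(1)P_s(1)=a$, i.e. $a=u(1)\cdot\frac{N+1}{2}$, which shows $\frac{N+1}{2}\mid a$ but does not by itself control other prime factors of $a$. A more robust approach is to recall that $a$ is (up to sign and unit) the resultant-type quantity $\mathrm{Res}(P_s,1-x^N)/g$-data, and in fact the Lemma \ref{le4}/Theorem \ref{th_gen} machinery already produced $\det(A)$ as exactly such a B\'ezout constant when $\det(A)\neq 0$. I would argue that the integer $a$ in \eqref{eq2} divides $\det(A)=\pm\frac{N+1}{2}\left(\frac{N+1}{4}\right)^{(N-1)/2}$ in case (1), and divides the analogous product $\prod_{\zeta\neq 1}P_s(\zeta)$ (the $\det(A)$ computation with the factor at $\zeta=1$ removed) in cases (2) and (3). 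Thus every prime factor of $a$ divides $N+1$ (and in the $|D_s|=\frac{N-1}{2}$ case, the reduced product contributes the factor accounting for $N-1$). Granting this, the divisibility hypotheses translate directly: $p\nmid(N+1)$ forces $p\nmid a$, so Theorem \ref{th2} gives equality.

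Finally I would assemble the three cases. For (1), $\deg g=0$ and $p\nmid(N+1)\Rightarrow p\nmid a$, so $LC_p(s)=N$. For (3), $\deg g=1$ but the extra hypothesis $p\nmid(N^2-1)=\!(N-1)(N+1)$ ensures $p\nmid a$ even after accounting for the $N-1$ contribution, giving $LC_p(s)=N-1+?$; here I must be careful: $N-\deg g=N-1$, yet the claim is $LC_p(s)=N$, so in case (3) the point is rather that over $\mathbb{F}_p$ the factor $x-1$ of $g$ is cancelled, i.e. $\gcd$ over $\mathbb{F}_p$ drops to $1$, which happens precisely when $p\nmid(N-1)$; combined with $p\nmid(N+1)$ this is $p\nmid(N^2-1)$ and yields $LC_p(s)=N$. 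For (2), $p\mid(N-1)$ keeps the factor $x-1$ alive over $\mathbb{F}_p$ while $p\nmid(N+1)$ prevents any further common factor, so $d(x)=x-1$ and $LC_p(s)=N-1$. The delicate part throughout is justifying that no \emph{other} common factor can appear modulo $p$ beyond the controlled $x-1$ term; I would close this gap by invoking Lemma \ref{le5} (prime factors of $2^N-1$ are large) in the $\det(A)$ estimate exactly as in Theorem \ref{th_gen2}, transplanted to the resultant $a$, ensuring the prime $p$ cannot divide the reduced product unless it divides $N\pm1$.
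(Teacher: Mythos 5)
Your case (1) is sound and is essentially the paper's argument: $\det(A)\neq 0$ forces $g(x)=1$, Lemma \ref{le4} exhibits an admissible choice $a=\det(A)=\pm\frac{N+1}{2}\left(\frac{N+1}{4}\right)^{\frac{N-1}{2}}$, whose prime factors all divide $N+1$, and Theorem \ref{th2} gives $LC_p(s)=N$. (A minor caution even here: $a$ in \eqref{eq2} is not unique, so saying ``$a$ divides $\det(A)$'' is not well posed; what is used is that one admissible relation has $a=\det(A)$, which suffices.) But cases (2) and (3) contain a genuine error. When $|D_s|=\frac{N-1}{2}$ you assert $P_s(1)=0$ and hence $(x-1)\mid g(x)$ in $\mathbb{Z}[x]$; this is false: over $\mathbb{Z}$ one has $P_s(1)=|D_s|=\frac{N-1}{2}\neq 0$ (it vanishes only modulo primes dividing $N-1$), and since $P_s(w_N^j)=-P_{\overline{s}}(w_N^j)\neq 0$ for $j\neq 0$ (the complement's support is an $(N,\frac{N+1}{2},\frac{N+1}{4})$ difference set, so its determinant is nonzero), in fact $g(x)=1$ in this case too. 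The error propagates: your case (3) patch, that the factor $x-1$ of $g$ gets ``cancelled'' over $\mathbb{F}_p$, is impossible, because $g$ divides both $P_s$ and $1-x^N$ in $\mathbb{Z}[x]$ and hence its reduction divides $d(x)$; the gcd can only grow under reduction mod $p$, never shrink. Worse, in case (2), with the correct $g(x)=1$, Theorem \ref{th2} is structurally incapable of proving $LC_p(s)=N-1$: it yields $LC_p(s)\le N$ with equality whenever some admissible $a$ is prime to $p$, so here every admissible $a$ must be divisible by $p$ (this is exactly the strict-inequality regime the paper illustrates with $s=(11010)$ and $p=3$), and no control of $a$ by resultants can help. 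Your appeal to Lemma \ref{le5} is also misplaced: it concerns prime factors of $2^N-1$ and says nothing about divisibility by an arbitrary odd prime $p$.

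The missing idea, which is how the paper handles (2) and (3), is the complement trick carried out directly in $\mathbb{F}_p[x]$ rather than through the integer Bézout constant. Apply case (1) to $\overline{s}$ (using $p\nmid(N+1)$) to get $\gcd(P_{\overline{s}}(x),1-x^N)=1$ over $\mathbb{F}_p$; then the identity $P_s(x)+P_{\overline{s}}(x)=(1-x^N)/(1-x)$ shows any common factor of $P_s(x)$ and $(1-x^N)/(1-x)$ would divide both $P_{\overline{s}}(x)$ and $1-x^N$, so $\gcd\bigl(P_s(x),(1-x^N)/(1-x)\bigr)=1$ over $\mathbb{F}_p$. If $p\mid(N-1)$, then $P_s(1)=\frac{N-1}{2}\equiv 0\bmod p$, so $(1-x)\mid P_s(x)$ in $\mathbb{F}_p[x]$, the fraction $P_s(x)/(1-x^N)$ reduces to one with coprime numerator and denominator $(1-x^N)/(1-x)$, and Lemma \ref{le7} gives $LC_p(s)=N-1$; if instead $p\nmid(N-1)$, then $(1-x)\nmid P_s(x)$, so $\gcd(P_s(x),1-x^N)=1$ over $\mathbb{F}_p$ and $LC_p(s)=N$. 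In short: your framework proves only case (1); cases (2) and (3) require leaving Theorem \ref{th2} and arguing over $\mathbb{F}_p$ with the complementary sequence.
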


\begin{proof}

(1) Assume that $|D_s|=\frac{N+1}{2}$. Then $D_s$ is an $(N, \frac{N+1}{2}, \frac{N+1}{4})$ cyclic difference set.
In the proof of
Theorem \ref{th_gen2}, it is proved  that $\det(A)=\pm (\frac{N+1}{2})(\frac{N+1}{4})^{(\frac{N-1}{2})}$,
where $A=(a_{ij})=(s_{(i-j)\mod N})$ be the matrix defined by $s$.
Comparing  \eqref{eq_ge} and \eqref{eq2}, one has $g(x)=1$ and $a=\det(A)$. Hence $p\not|a$ by the assumption $p\not|(N+1)$.
The result then follows from Theorem \ref{th2}.

(2) Assume that $|D_s|=\frac{N-1}{2}$.
Then $D_{\overline{s}}$ is an $(N, \frac{N+1}{2}, \frac{N+1}{4})$ cyclic difference set.
According to the result of the first part, we have $\gcd(P_{\overline{s}}(x),1-x^N)=1$. Noting that $P_s(x)+P_{\overline{s}}=(1-x^N)/(1-x)$,
one can deduce $\gcd(P_s(x),(1-x^N)/(1-x))=1$. Because $p|(N-1)$, we have $P_s(1)=0$ which means $(1-x)|P_s(x)$.
Therefore $\gcd(P_s(x)/(1-x), (1-x^N)/(1-x))=1$. The result then follows from Lemma \ref{le7}.

(3) One can deduce the result similarly as the second part.
$\hfill\blacksquare$\end{proof}

Since $N$ is finite, the number of primes dividing  $N^2-1$ is finite.
Hence except finitely many cases,
 the linear complexity of  a  binary ideal $2$-level autocorrelation sequence
  regarded as a sequence over another prime finite field attains the maximum.

The following interesting result follows immediately from the above corollary.
\begin{theorem}
Let $s$ be a binary ideal $2$-level autocorrelation
sequence with period $N=2^n-1$. Let $\mathbb{F}_q$ be a finite field with an odd character $p$.

(1) If $|D_s|=\frac{N+1}{2}$, then $LC_q(s)=N$;

(2) If $|D_s|=\frac{N-1}{2}$  and  $p|(N-1)$, then $LC_q(s)=N-1$;

(3) If $|D_s|=\frac{N-1}{2}$ and $p\not|(N-1)$, then $LC_q(s)=N$.

\end{theorem}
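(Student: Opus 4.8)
The plan is to deduce the theorem directly from the preceding corollary, using Proposition~\ref{Pr_qp} to pass from $\mathbb{F}_q$ to its prime field and then exploiting the single special feature of the period $N=2^n-1$, namely that $N+1=2^n$ is a power of two. First I would invoke Proposition~\ref{Pr_qp} to write $LC_q(s)=LC_p(s)$, where $p$ is the odd characteristic of $\mathbb{F}_q$; it therefore suffices to compute $LC_p(s)$ in each of the three cases.

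The crucial observation is that, because $N+1=2^n$ and $p$ is an odd prime, one automatically has $p\not|(N+1)$. This discharges for free the hypothesis $p\not|(N+1)$ that appears in every part of the preceding corollary, which is precisely what makes the statement clean in the case $N=2^n-1$.

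It then remains to match each case of the theorem with the corresponding case of the corollary. For part~(1), $|D_s|=\frac{N+1}{2}$ together with the automatic $p\not|(N+1)$ is exactly the hypothesis of the corollary's part~(1), so $LC_p(s)=N$. For part~(2), $|D_s|=\frac{N-1}{2}$, the free fact $p\not|(N+1)$, and the assumption $p\mid(N-1)$ reproduce the hypotheses of the corollary's part~(2), giving $LC_p(s)=N-1$. For part~(3), I would factor $N^2-1=(N-1)(N+1)$; since $p\not|(N-1)$ by assumption and $p\not|(N+1)$ automatically, we get $p\not|(N^2-1)$, so the corollary's part~(3) yields $LC_p(s)=N$. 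In all three cases the identity $LC_q(s)=LC_p(s)$ finishes the argument.

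I do not expect any genuine obstacle here: the entire content is the elementary remark that $N+1$ is a power of two, hence coprime to every odd prime $p$. The only point requiring a little attention is the bookkeeping in part~(3), where one must factor $N^2-1$ to see that the single hypothesis $p\not|(N-1)$ already suffices once coprimality to $N+1$ comes for free.
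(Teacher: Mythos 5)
Your proposal is correct and takes essentially the same route as the paper, which proves this theorem simply by remarking that it ``follows immediately from the above corollary'': your filled-in details---that $N+1=2^n$ is automatically coprime to every odd prime $p$, that $p\nmid(N-1)$ together with $p\nmid(N+1)$ gives $p\nmid(N^2-1)$ in part (3), and the reduction $LC_q(s)=LC_p(s)$ via Proposition~\ref{Pr_qp}---are exactly the intended reasoning. Nothing is missing.
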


\section{Conclusion}
{To summarize, the contributions of this paper are threefold. Firstly,
a new method is presented to compute the $2$-adic complexity of binary sequences.
Secondly, all the known binary sequences with ideal $2$-level autocorrelation
are uniformly proved to have the maximum $2$-adic complexities, i.e. their $2$-adic complexities equal their periods.
As far as the authors known,  the $2$-adic complexities of all these sequences except $m$-sequences are not known before this paper.
We also  investigated the $2$-adic complexities of two classes of optimal autocorrelation sequences with period $N\equiv1\mod4$.
Thirdly, the new method is used to study the linear complexity
of binary sequences taken as sequences over other finite fields.
 An interesting finding is  that, except finitely many cases,
 the linear complexity of a  binary ideal $2$-level autocorrelation sequence
  regarded as a sequence over another prime finite field attains the maximum.

% At last, we would like to mention that our new method can be generalized to
% $p$-ary ($p$ prime) sequences though we are focusing on the binary sequences in the present paper.
%% Theorem \ref{th_gen} will also hold after replacing $1-2^N$ by $1-p^N$.
%  We believe that our new method can be used to
%determine the 2-adic complexities of more sequences. The readers is cordially invited to
%join this adventure.

 }

\section*{Acknowledgments}
The work was supported by the Natural Science Foundation of China
(No. 61272484) and Basic Research Fund of National University of Defense Technology (No. CJ 13-02-01).

\bibliographystyle{model1a-num-names}
\bibliography{<your-bib-database>}

\end{CJK*}
\end{document}